\newcommand{\OMIT}[1]{} %
\newcommand{\p}{{\rm P}}
\newcommand{\np}{{\rm NP}}
\newcommand{\conp}{{\rm coNP}}
\newcommand{\lp}{{\cal LP}^*}
\newcommand{\calL}{{\cal L}}
\newcommand{\calA}{{\cal A}}
\newcommand{\I}{{\cal I}}
\newcommand{\calS}{{\cal S}}
\newcommand{\calCS}{{\cal CS}}
\newcommand{\calG}{{\cal G}}
\newcommand{\calC}{{\cal C}}
\newcommand{\eps}{{\varepsilon}}
\newcommand{\CoS}{\mathit{CoS}}
\newcommand{\poly}{\mathrm{poly}}
\newcommand{\core}{\mathrm{core}}
\newcommand{\cscore}{\textit{CS-core}}
\newcommand{\vecp}{{\mathbf p}}
\newcommand{\vecDelta}{\vec{\Delta}}
\newcommand{\w}{{\mathbf w}}
\newcommand{\CS}{\mathit{CS}}
\newtheorem{theorem}{Theorem}[section]
\newtheorem{lemma}[theorem]{Lemma}
\newtheorem{proposition}[theorem]{Proposition}
\newtheorem{example}[theorem]{Example}
\newtheorem{definition}[theorem]{Definition}
\newcommand\qedblob{\mbox{\ding{113}}}
\newenvironment{proof}{\noindent{\bf Proof.}\hspace*{1em}}{\literalqed\bigskip}
\def\literalqed{{\ \nolinebreak\hfill\mbox{\qedblob\quad}}}
\newcommand{\sproofof}[1]{\noindent{\bf Proof of {#1}.}\hspace*{1em}}
\newcommand{\eproofof}[1]{\noindent{\hspace*{0.1in} \hfil \hfill \mbox{\literalqed{} {#1}}}\quad\bigskip}
\begin{document}

\title{The Cost of Stability
in Coalitional Games\thanks{This paper's results will be presented at the
\emph{2nd International Symposium on Algorithmic Game Theory}.
A preliminary version of this paper was published in the proceedings
of the \emph{8th International Joint Conference on Autonomous Agents and
Multiagent Systems}~\cite{aamas-cos}.}}

\author{Yoram Bachrach\thanks{Microsoft Research, Cambridge, United Kingdom.}
\and
Edith Elkind\thanks{University of Southampton, United Kingdom.}$^{,}$
\thanks{Nanyang Technological University, Singapore.}
\and
Reshef Meir\thanks{School of Engineering and Computer Science,
Hebrew University, Jerusalem, Israel.} 
\and
Dmitrii Pasechnik$^4$
\and
Michael Zuckerman$^5$
\and
J\"org Rothe\thanks{Institut f{\"{u}}r Informatik, 
Heinrich-Heine-Universit{\"a}t D{\"u}sseldorf,
Germany.} 
\and Jeffrey S. Rosenschein$^5$
}

\date{July 24, 2009}

\maketitle
\begin{abstract}
{\begin{small}
  A key question in cooperative game theory is that of coalitional
  stability, usually captured by the notion of the \emph{core}---the set of outcomes
  such that no subgroup of players has an incentive 
  to deviate. However,
  some coalitional games have empty cores, and any outcome in such
  a game is unstable.

  In this paper, 
  we investigate the possibility of stabilizing a coalitional game 
  by using external payments. We consider a scenario where an external party, which
  is interested in having the players work
  together, offers a supplemental payment to the grand coalition (or, more generally, 
  a particular coalition structure). This payment is conditional 
  on players not deviating from their coalition(s). The sum of this payment 
  plus the actual gains of the coalition(s) may then be divided among the
  agents so as to promote stability. We define the \emph{cost of
  stability (CoS)} as the minimal external payment that stabilizes the game.

  We provide general bounds on the cost of stability in several
  classes of games, and explore its algorithmic properties. To develop
  a better intuition for the concepts we introduce, we provide
  a detailed algorithmic study of the cost of stability in weighted
  voting games, a simple but expressive class of games which can
  model decision-making in political bodies, and cooperation in
  multiagent settings. 
  Finally, we extend our model and results to games with coalition structures.
\end{small}
}
\end{abstract}

\section{Introduction}
\label{l_introduction}

In recent years, 
algorithmic game theory, an emerging field that combines
computer science, game theory and social choice, has received
much attention from the multiagent 
community~\cite{Rosenschein:1985:IJCAI,Ephrati:1991:AAAI,wellman:1995,sandholm1995}. 
Indeed, 
multiagent systems research focuses on designing 
intelligent agents, i.e., entities that can
coordinate, cooperate and negotiate without requiring human
intervention. In many application domains, such agents 
are \emph{self-interested}, i.e., they are built to maximize the rewards obtained 
by their creators. Therefore, these agents can be modeled naturally using
game-theoretic tools. Moreover, as agents often have to function
in rapidly changing environments, computational
considerations are of great concern to their designers as well.

In many settings, such as online auctions and other types
of markets, agents act individually. In this case, the standard
notions of noncooperative game theory, such as \emph{Nash equilibrium}
or \emph{dominant-strategy equilibrium}, provide a prediction of the outcome of the interaction.
However, another frequently occurring type of scenario is that agents
need to form teams to achieve their individual goals. 
In such domains, the
focus turns from the interaction between single agents to the
capabilities of subsets, or \emph{coalitions}, of agents.
Thus, a more appropriate modeling toolkit for this setting
is that of \emph{cooperative}, or \emph{coalitional}, 
game theory~\cite{tijs}, 
which 
studies what coalitions are most likely to arise, and
how their members distribute the gains from cooperation.  
When agents are self-interested, the latter question is obviously
of great importance. Indeed, the
\emph{total} utility generated by the coalition is of little interest
to individual agents; rather, each agent 
aims to maximize her own utility.
Thus, a \emph{stable} coalition can be formed
only if the gains from cooperation can be distributed
in a way that satisfies all agents. 

The most prominent solution concept that aims to formalize
the idea of stability in coalitional games 
is the \emph{core}. Informally,  
an \emph{outcome} of a coalitional game is a \emph{payoff vector}
which for each agent lists her share of the profit of the 
\emph{grand coalition},
i.e., 
the coalition that includes all agents.
An outcome is said to be in the core if it distributes
gains so that no subset of 
agents has an incentive to abandon the grand coalition
and form a coalition of their own. 
It can be argued that 
the concept of the core captures the intuitive
notion of stability in cooperative settings.
However, it has an important drawback:
the core of a game may be empty. In games with
empty cores, any outcome is unstable, and therefore
there is always a group of agents that is tempted
to abandon the existing plan.
This observation 
has triggered the invention of less demanding solution
concepts, such as $\eps$-core and the least core, 
as well as an interest in noncooperative approaches
to identifying stable outcomes in coalitional 
games~\cite{chatterjee93,okada96}.

In this paper, we approach this issue from a different perspective.
Specifically, we examine the possibility of stabilizing the
outcome of a game using external payments. Under this model, an
external party (the \emph{center}), which can be seen as a central authority
interested in stable functioning
of the system, 
attempts to incentivize a coalition of agents to
cooperate in a stable manner. This party does this by offering
the members of a coalition a supplemental payment 
if they cooperate. This external
payment is given to the coalition as a whole, and is provided only if
this coalition is formed. 

Clearly, when the supplemental payment is large enough, the resulting
outcome is stable: the profit that the deviators
can make on their own is dwarfed by the subsidy they could receive
by sticking to the prescribed solution. However, normally
the external party would want to minimize its expenditure.
Thus, in this paper we define and study the \emph{cost of stability}, 
which is the minimal supplemental payment that is required to 
ensure stability in a coalitional game. We start by considering
this concept in the context where the central authority
aims to ensure that \emph{all} agents cooperate, i.e., it
offers a supplemental payment in order to stabilize the grand
coalition. We then extend our analysis to the setting where
the goal of the center is the stability of 
a \emph{coalition structure}, i.e., a partition of all
agents into disjoint coalitions. In this setting, the center
does not expect the agents to work as a single team, but 
nevertheless wants each individual team to be immune
to deviations. Finally, we consider the scenario where
the center is concerned with the stability of a particular
coalition within a coalition structure. This model is appropriate 
when the central
authority wants a particular group of agents to work together, 
but is indifferent to other agents switching coalitions.

We first provide bounds on the cost of stability in general coalitional 
games. We then show that for some interesting special cases, 
such as super-additive games, these bounds can be improved considerably.  
We also propose a general algorithmic technique
for computing the cost of stability. Then, 
to develop a better understanding of the concepts
proposed in the paper, we apply them in the context of
\emph{weighted voting games} (WVGs), a simple but powerful
class of games that have been used to model cooperation 
in settings as diverse as, on the one hand, decision-making in political bodies
such as the United Nations Security Council and 
the International Monetary Fund
and, on the other hand,  
resource allocation in multiagent systems.
For such games, we are able to obtain a complete 
characterization of the cost of stability from an
algorithmic perspective.

The paper is organized as follows. In Section~\ref{l_preliminaries},
we provide the necessary background on coalitional games.
In Section~\ref{l_def}, we
formally define the cost of stability for the setting
where the desired outcome is the grand coalition, 
prove bounds on the cost of stability, and outline a general
technique for computing it.
We then focus on the computational aspects of the cost of stability 
in the context of our selected domain, i.e., weighted voting games.
In Section~\ref{l_exact}, 
we demonstrate that computing the cost of stability in such games
is $\conp$-hard if the weights are given in binary. On the other
hand, for unary weights, we provide an efficient algorithm
for this problem. 
We also investigate whether the cost of stability can be efficiently
approximated. 
In Section~\ref{l_approx}, we answer this question positively
by describing 
a fully polynomial-time approximation scheme (FPTAS)
for our problem.
We complement this result by showing that, by distributing 
the payments in a very natural manner, we get within a factor
of 2 of the optimal adjusted gains, i.e., the sum of the value
of the grand coalition and the external payments.
While this method of allocating payoffs does not necessarily minimize
the center's expenditure, the fact that it is both easy to
implement and has a bounded worst-case performance
may make it an attractive proposition in certain settings.  
In Section~\ref{l_cs}, we extend our discussion to the setting
where the center aims to stabilize an arbitrary coalition structure, 
or a particular coalition within it, rather than the grand coalition.
We end the paper with a discussion of related work and some conclusions.

\section{Preliminaries}
\label{l_preliminaries}

Throughout this paper, given a vector $\vec{x} = (x_1, \dots, x_n)$
and a set $C\subseteq\{1, \dots, n\}$ we write $x(C)$
to denote $\sum_{i\in C}x_i$.
\begin{definition}
\label{l_def_coalitional_game}
A \emph{(transferable utility) coalitional game} $G=(I, v)$ is given by
a set of \emph{agents} (synonymously, \emph{players}) $I=\{1, \dots, n\}$ 
and a \emph{characteristic function}
$v : 2^I \rightarrow \mathbb{R}^+ \cup \{0\}$ that for any subset (coalition) 
of agents lists the total utility these agents
achieve by working together. We assume $v(\emptyset) = 0$. 
\end{definition}

A coalitional game $G=(I, v)$ is called \emph{increasing} 
if for all coalitions $C' \subseteq C$ we have $v(C') \leq v(C)$, 
and \emph{super-additive} if for all disjoint
coalitions $C, C' \subseteq I$ we have $v(C)+v(C') \leq v(C \cup C')$.
Note that since $v(C)\ge 0$ for any $C\subseteq I$, all super-additive 
games are increasing. 
A coalitional game $G=(I, v)$ is called \emph{simple} if 
it is increasing and $v(C)\in\{0, 1\}$
for all $C\subseteq I$. In a simple game, 
we say that a coalition $C \subseteq I$ \emph{wins} 
if $v(C)=1$, and \emph{loses} if $v(C)=0$. Finally, a coalitional
game is called \emph{anonymous} if $v(C)=v(C')$ for any $C, C'\subseteq I$
such that $|C|=|C'|$. A particular class of simple games considered in this paper is that
of \emph{weighted voting games} (WVGs). 

\begin{definition}
  \label{l_def_wvg} A \emph{weighted voting game} is a simple
  coalitional game given by a set of agents $I = \{1, \ldots, n\}$, a
  vector $\w = (w_1,\ldots ,w_n)$ of nonnegative weights,
  where $w_i$ is agent $i$'s weight, and a threshold~$q$. 
  The \emph{weight of a coalition} $C \subseteq I$ is 
  $w(C) = \sum_{i \in C} w_i$.
  A coalition $C$ \emph{wins the game} (i.e., $v(C) =
  1$) if $w(C) \geq q$, and \emph{loses the game} (i.e., $v(C) = 0$)
  if $w(C) < q$.
\end{definition}

We denote the WVG with the weights 
$\w = (w_1, \ldots, w_n)$ and the threshold $q$ as $[\w;q]$ or
$[w_1,\ldots,w_n;q]$. Also, we set $w_{\max}=\max_{i\in I}w_i$. 
It is easy to see that WVGs are simple
games; however, they are not necessarily super-additive.
Throughout this paper, we assume that $w(I)\ge q$, i.e., the grand coalition
wins.

The characteristic function of a coalitional game defines only the
\emph{total} gains a coalition achieves, but does not offer a way of
distributing them among the agents. Such a division is called an
imputation (or, sometimes, a payoff vector).
\begin{definition}
  \label{l_def_imputation} 
  Given a coalitional game $G=(I, v)$, a vector
  $\vecp=(p_1,\ldots ,p_n)\in \mathbb{R}^n$ is called an \emph{imputation 
  for $G$} if it satisfies $p_i \ge v( \{i \})$
  for each $i$, $1 \leq i \leq n$, and
  $\sum_{i=1}^n p_i = v(I)$. 
  We call $p_i$ the \emph{payoff of agent $i$}; the \emph{total payoff
  of a coalition $C\subseteq I$} is given by~$p(C)$.
  We write $\I(G)$ to denote the set of all imputations for $G$.
\end{definition}

For an imputation to be stable, it should be the case 
that no subset of players has an incentive
to deviate. Formally, we say that a coalition $C$ \emph{blocks} an
imputation $\vecp=(p_1,\ldots, p_n)$ if $p(C) < v(C)$.
The \emph{core} of a coalitional game $G$ is defined
as the set of imputations not blocked by any coalition, i.e.,
$\core(G) = \{ \vecp\in\I(G) \mid
p(C) \ge v(C)\ \text{for each } C \subseteq I\}$.
An imputation in the core guarantees the stability of the grand coalition.
However, the core can be empty. 

In WVGs, and, more generally, in simple games, 
one can characterize the core 
using the notion of veto agents, i.e.,
agents that are indispensable
for forming a winning coalition. Formally, given a simple coalitional
game $G=(I, v)$, an agent $i\in I$ is said to be a \emph{veto agent}
if for all coalitions $C \subseteq I\setminus\{i\}$ we have $v(C) = 0$. 
The following is a folklore result regarding nonemptiness of the core.

\begin{theorem}
\label{veto}
Let $G=(I, v)$ be a simple coalitional game. If there are no veto agents in
$G$, then the core of $G$ is empty. Otherwise, let $I'=\{i_1,\ldots, i_m\}$
be the set of veto agents in $G$. Then the core of $G$
is the set of imputations
that distribute all the gains among the veto agents only, i.e., 
\[
\core(G)=\{\vecp\in\I(G) \mid p(I') = 1\}.
\]
\end{theorem}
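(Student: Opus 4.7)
The plan is to argue in two directions by repeatedly exploiting the defining inequality $p_i \ge v(\{i\}) \ge 0$ (so all payoffs are nonnegative) together with the fact that in a simple game a coalition is winning iff $v(C)=1$. I will treat the nontrivial case $v(I)=1$; if $v(I)=0$, the increasing property forces $v\equiv 0$ and both sides of the characterization are trivial.

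First I would prove: \emph{if $\vecp\in\core(G)$, then every non-veto agent gets payoff $0$.} Fix a non-veto agent $j$. By definition of "non-veto" there exists $C\subseteq I\setminus\{j\}$ with $v(C)=1$. The core condition gives $p(C)\ge v(C)=1$. Since payoffs are nonnegative and $C\subseteq I\setminus\{j\}$, we have $p(I\setminus\{j\})\ge p(C)\ge 1$. Combined with $p(I)=v(I)=1$, this forces $p_j = p(I)-p(I\setminus\{j\}) \le 0$, and nonnegativity yields $p_j=0$.

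This single observation settles both halves of the "no veto agents" case and one inclusion of the characterization. Indeed, if no veto agents exist then every agent is non-veto, so every $p_i=0$; then $p(I)=0 \ne 1 = v(I)$, contradicting $\vecp\in\I(G)$. Hence $\core(G)=\emptyset$. If veto agents exist and $\vecp\in\core(G)$, then every non-veto agent has $p_i=0$, so $p(I')=p(I)=v(I)=1$, giving the inclusion $\core(G)\subseteq\{\vecp\in\I(G)\mid p(I')=1\}$.

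For the reverse inclusion I would take any $\vecp\in\I(G)$ with $p(I')=1$ and verify the core inequality coalition by coalition. Let $C\subseteq I$. If $v(C)=0$, nonnegativity of payoffs gives $p(C)\ge 0 = v(C)$. If $v(C)=1$, then $C$ must contain every veto agent: for each $i\in I'$, if $i\notin C$ then $C\subseteq I\setminus\{i\}$, which by the definition of veto forces $v(C)=0$, a contradiction. Hence $I'\subseteq C$, so $p(C)\ge p(I')=1=v(C)$. Thus $\vecp\in\core(G)$.

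The argument is essentially a routine unwinding of definitions, so there is no substantial obstacle; the only subtlety is remembering to use nonnegativity of payoffs (which comes from $v(\{i\})\ge 0$) in the step that deduces $p_j=0$, and to invoke the characterization of winning coalitions in terms of containing all veto agents in the reverse direction.
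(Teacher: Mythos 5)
Your proof is correct and is exactly the standard argument; the paper itself states this result as folklore and gives no proof, so there is nothing to diverge from. (The only quibble is the degenerate case $v(I)=0$: there every agent is vacuously a veto agent and the core is the zero imputation while $\{\vecp\in\I(G)\mid p(I')=1\}$ is empty, so the characterization as written presupposes $v(I)=1$ rather than both sides being "trivial" — but this is an edge case the theorem statement itself implicitly excludes.)
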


So far, we have tacitly assumed that the only possible outcome 
of a coalitional game is the formation of the grand coalition.
However, often it makes more sense for the agents to
form several disjoint coalitions, each of which can focus on 
its own task. For example, WVGs can be used
to model the setting where each agent has a certain amount of resources
(modeled by her weight), and there are a number of identical tasks
each of which requires a certain amount of these resources 
(modeled by the threshold)
to be completed. In this setting, the formation of the grand coalition
means that only one task will be completed, even if there are enough
resources for several tasks. 

The situation when agents can split into teams to work on several
tasks simultaneously can be modeled using the notion of a
coalition structure, i.e., a partition of the set
of agents into disjoint coalitions. Formally, we say that
$\CS=(C^1, \dots, C^m)$ is a \emph{coalition structure} 
over a set of agents $I$ if $\bigcup_{i=1}^m C^i=I$ and
$C^i\cap C^j=\emptyset$ for all $i\neq j$; we write $\CS\in\calCS(I)$.
Also, we overload notation by writing $v(\CS)$ to denote
$\sum_{C^j\in \CS}v(C^j)$.
If coalition structures are allowed, an outcome of a game
is not just an imputation, but a pair $(\CS, \vecp)$, where
$\vecp$ is an imputation for the coalition structure $\CS$, 
i.e., $\vecp$ distributes the gains of every coalition in $\CS$
among its members. Formally, we say that $\vecp=(p_1, \dots, p_n)$ 
is an \emph{imputation for a coalition structure} 
$\CS=(C^1, \dots, C^m)$ in a game $G=(I, v)$
if $p_i\ge 0$ for all $i$, $1 \leq i \leq n$, and $p(C^j)=v(C^j)$
for all $j$, $1 \leq j \leq m$; we write $\vecp\in\I(\CS, G)$.
We can also generalize the notion of the core introduced
earlier in this section to games with coalition structures.
Namely, given a game $G=(I, v)$, we say that an outcome 
$(\CS, \vecp)$ is in the \emph{CS-core of $G$} if
$\CS$ is a coalition structure over $I$, $\vecp\in\I(\CS, G)$
and $p(C)\ge v(C)$ for all $C\subseteq I$; we write
$(\CS, \vecp)\in\cscore(G)$.
Note that if $\vecp$ is in the core of $G$ then
$(I, \vecp)$ is in the CS-core of $G$; however, the converse
is not necessarily true. 

\section{The Cost of Stability}
\label{l_def}

In many games, forming the grand coalition 
maximizes social welfare; this happens, for example, in
super-additive games.
However, the core of such games may still be empty. 
In this case, it would be impossible to
distribute the gains of the grand coalition in a stable way, so it
may fall apart despite being socially optimal. 
Thus, an external party, such as a benevolent
central authority, may want to incentivize the agents to cooperate, e.g., 
by offering
the agents a supplemental payment $\Delta$ if they stay in the grand coalition. 
This situation can be modeled as an \emph{adjusted coalitional game}
derived from the original coalitional game $G$.

\begin{definition}\label{l_adjusted_game} 
Given a coalitional game $G=(I, v)$ and $\Delta \geq 0$,
the \emph{adjusted coalitional game} 
$G(\Delta)=(I, v')$ is given by
$v'(C) = v(C)$ for $C \neq I$, and
$v'(I) = v(I) + \Delta$.
\end{definition}

We call $v'(I) = v(I) + \Delta$ the \emph{adjusted
gains} of the grand coalition.
We say that a vector $\vecp\in{\mathbb R}^n$ is a \emph{super-imputation}
for a game $G=(I, v)$ if $p_i\ge 0$ for all $i\in I$ and $p(I)\ge v(I)$.
Furthermore, we say that a super-imputation $\vecp$ is \emph{stable}
if $p(C)\ge v(C)$ for all $C\subseteq I$. A super-imputation $\vecp$
with $p(I)=v(I)+\Delta$ distributes the adjusted gains, i.e., it 
is an imputation for $G(\Delta)$; it is stable if and only if 
it is in the core of $G(\Delta)$. 
We say that a  supplemental payment $\Delta$ \emph{stabilizes} the grand 
coalition in a game $G$ 
if the adjusted game $G(\Delta)$ has a nonempty core. Clearly, if $\Delta$
is large enough (e.g., $\Delta = n\max_{C\subseteq I} v(C)$), the game
$G(\Delta)$ will have a nonempty core. However, usually 
the central authority wants to spend as little money as possible.
Hence, we define the cost of stability as the \emph{smallest} external payment
that stabilizes the grand coalition.

\begin{definition}
  \label{l_def_cost_of_stability} 
  Given a coalitional game $G=(I, v)$, its \emph{cost of stability}
    $\CoS(G)$ is defined as 
\[
\CoS(G) = \inf\{\Delta\mid \Delta\ge 0\text{ }\mathrm{and}\text{ }
                 \core(G(\Delta))\neq\emptyset\}.
\]
\end{definition}

We have argued that the set 
$\{\Delta\mid \Delta\ge 0\textrm{ and }\core(G(\Delta))\neq\emptyset\}$
is nonempty. Therefore, $G(\Delta)$ is well-defined. Now, 
we
prove that this set contains 
its greatest lower bound $\CoS(G)$, i.e., that the game
$G(\CoS(G))$ has a nonempty core.
While this can be shown using a continuity argument, we will now 
give a different proof, which will also be useful
for exploring the cost of stability from an algorithmic perspective.
Fix a coalitional game $G=(I, v)$ and consider the 
following linear program $\lp$: 
\begin{eqnarray}
\label{l_lp_cos_form}
\min \Delta & &\textrm{subject to:}\nonumber\\
\Delta &\ge&  0,\\
p_i &\ge&  0 \textrm{\quad for each }i=1, \dots, n,\\
\sum_{i\in I} p_i &=& v(I)+\Delta,\\
\sum_{i\in C}p_i &\ge&  v(C)\textrm{\quad for all }C\subseteq I.
\end{eqnarray}
\vspace*{-0.50cm}

It is not hard to see that the optimal value of this linear program
is exactly $\CoS(G)$. Moreover, any optimal solution of $\lp$
corresponds to an imputation in the core of $G(\CoS(G))$ and therefore
the game $G(\CoS(G))$ has a nonempty core.

As an example, consider a uniform weighted voting game, i.e., a WVG $G=[\w; q]$
with $w_1=\cdots=w_n=w$.
We can derive an explicit formula for $\CoS(G)$.

\begin{theorem}\label{thm:identical-weights}
For a WVG $G=[w,w,\ldots,w;q]$,
we have $\CoS(G) = \frac{n}{\lceil q/w\rceil}-1$.
\end{theorem}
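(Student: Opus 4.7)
The plan is to reduce the uniform WVG $G = [w, \dots, w; q]$ to an anonymous threshold game and then exploit symmetry. Let $k = \lceil q/w \rceil$. Since all weights equal $w$, a coalition $C$ wins if and only if $|C| \ge k$, so $G$ is effectively the simple game in which $v(C) = 1$ iff $|C| \ge k$. (Our assumption $w(I) \ge q$ gives $k \le n$, so the statement is well defined.)

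For the upper bound, I will exhibit an explicit stable super-imputation. Set $p_i = 1/k$ for every $i \in I$. Then $p(I) = n/k$, so this super-imputation corresponds to a supplemental payment of $\Delta = n/k - 1$. For any $C \subseteq I$, either $v(C) = 0$ (and $p(C) \ge 0 = v(C)$ is immediate), or $|C| \ge k$, in which case $p(C) = |C|/k \ge 1 = v(C)$. Hence $\vecp$ lies in the core of $G(n/k - 1)$, giving $\CoS(G) \le n/k - 1$.

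For the matching lower bound, I will use an averaging argument over minimum winning coalitions. Let $\vecp$ be any stable super-imputation for $G$ with $p(I) = v(I) + \Delta = 1 + \Delta$. Every $C \subseteq I$ with $|C| = k$ is winning, so stability gives $p(C) \ge 1$. Summing over all such $C$, and observing that each player lies in exactly $\binom{n-1}{k-1}$ of the $\binom{n}{k}$ coalitions of size $k$, yields
\[
\binom{n-1}{k-1}\,(1 + \Delta) \;=\; \sum_{|C|=k} p(C) \;\ge\; \binom{n}{k},
\]
so $1 + \Delta \ge \binom{n}{k}/\binom{n-1}{k-1} = n/k$, i.e., $\Delta \ge n/k - 1$. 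Applied to any optimal solution of $\lp$, this proves $\CoS(G) \ge n/k - 1$, and combining the two bounds gives the theorem.

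There is no real obstacle here: the only potentially delicate point is that the LP has exponentially many constraints, and one might worry that a clever non-symmetric allocation could beat the uniform one. The counting argument above handles this cleanly without needing to invoke symmetry of the LP, since it shows that every feasible $\vecp$—symmetric or not—must satisfy $p(I) \ge n/k$.
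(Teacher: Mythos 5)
Your proof is correct and follows essentially the same strategy as the paper: the uniform payoff $p_i = 1/\lceil q/w\rceil$ for the upper bound, and an averaging argument over minimum winning coalitions for the lower bound. The only cosmetic difference is that you average over all $\binom{n}{k}$ coalitions of size $k$, whereas the paper averages over $n$ cyclically shifted coalitions of that size; both give $p(I)\ge n/k$ by the same double-counting principle.
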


\begin{proof}
First, note that by scaling $w$ and $q$ we can assume that $w=1$.

Set $\Delta=\frac{n}{\lceil q\rceil}-1$ and
consider the imputation $\vecp=(p_1, \dots, p_n)$ given by
$p_i=\frac{1}{\lceil q\rceil}$ for $i$, $1 \leq i \leq n$.
Clearly, we have $p(I)=\frac{n}{\lceil q\rceil}$, so $\vecp\in G(\Delta)$.
Moreover, for any winning
coalition $C$, we have $|C|\ge \lceil q\rceil$, so
$p(C)\ge \lceil q\rceil\frac{1}{\lceil q\rceil}= 1$.
Therefore, $\vecp$ is in the core of $G(\Delta)$,
and hence $\CoS(G)\le \Delta$.

On the other hand, consider any stable super-imputation $\vecp$.
Set $s=\lceil q\rceil$. Clearly, for any coalition
$C$ with $|C|=s$ we have $p(C)\ge 1$. Now, consider
a collection of coalitions $C^1, \dots, C^n$, where
$C^i=\{i \bmod n, i+1 \bmod n, \dots, i+s-1 \bmod n\}$:
for example, we have $C^{n-1}=\{n-1, n, 1, \dots, s-2\}$.
We have $|C^i|=s$ for all $i$, $1 \leq i \leq n$,
so $p(C^1)+\dots+p(C^n)\ge n$.  Since
each player $i$ occurs in exactly $s$ of these coalitions,
we have $p(I)s=p(C^1)+\dots+p(C^n)$. Hence,
$p(I)\ge n/s=\frac{n}{\lceil q\rceil}$ and therefore
$\CoS(G)\ge \Delta$.~\end{proof}

For example, 
if $w(n-1)<q\le wn$, then $\CoS(G)=0$, i.e., 
$G$ has a nonempty core. On the other hand, 
if $w=1$, $n=3k$ and $q=2k$ for some integer $k>0$, 
i.e., $q=\frac23n$, we have $\CoS(G)=\frac32-1=\frac12$.

\subsection{Bounds on $\CoS(G)$ in General Coalitional Games}

Consider an arbitrary coalitional game $G=(I, v)$. 
Clearly, $\CoS(G)=0$ if and only if $G$ has a nonempty core.
Further, we have argued that $\CoS(G)$ is upper-bounded by $n\max_{C\subseteq I}v(C)$,
i.e., $\CoS(G)$ is finite for any fixed coalitional game. 
Moreover, the bound of $n\max_{C\subseteq I}v(C)$ is (almost) tight.
To see this, consider a (simple)
game $G'$ given by $v'(\emptyset)=0$ and $v'(C)=1$ for all $C\neq\emptyset$.
Clearly, we have $\CoS(G')=n-1$: any super-imputation that pays some
agent less than $1$ will not be stable, whereas
setting $p_i=1$ for all $i\in I$
ensures stability. Thus, the cost of stability can be quite large
relative to the value of the grand coalition.

On the other hand, we can provide a lower bound 
on $\CoS(G)$ in terms of the values of coalition structures over $I$.
Indeed, for an arbitrary coalition structure $\CS\in\calCS(I)$, 
we have $\CoS(G)\ge v(\CS)-v(I)$. To see this, note that
if the total payment to the grand coalition is less than
$(v(\CS)-v(I))+v(I)$, then for some coalition 
$C\in \CS$ it will be the case that $p(C)<v(C)$.
It would be tempting to conjecture that 
$\CoS(G)=\max_{\CS\in\calCS(I)}(v(\CS)-v(I))$. However, 
a counterexample is provided by
Theorem~\ref{thm:identical-weights}
with $w=1$, $q=\frac23n$: indeed, in this case
we have $\CoS(G)=\frac12$, yet $\max_{\CS\in\calCS(I)}(v(\CS)-v(I))=0$.
We can summarize these observations as follows.
\begin{theorem}
For any coalitional game $G=(I, v)$, we have
$$
\max_{\CS\in\calCS(I)}(v(CS)-v(I))\le \CoS(G)\le n\max_{C\subseteq I}v(C).
$$
\end{theorem}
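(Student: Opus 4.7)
The plan is to prove the two inequalities independently, both essentially by exhibiting the constructions that were already alluded to in the paragraph preceding the theorem statement. The linear program $\lp$ together with the argument that $G(\CoS(G))$ has a nonempty core allows us to treat $\CoS(G)$ as an actual minimum, so it suffices to exhibit a stable super-imputation for the upper bound and to analyze an arbitrary stable super-imputation for the lower bound.

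For the upper bound, I would set $M=\max_{C\subseteq I}v(C)$ and define the super-imputation $\vecp$ by $p_i=M$ for every $i\in I$. For any nonempty $C\subseteq I$ we then have $p(C)=|C|\,M\ge M\ge v(C)$, and $p(\emptyset)=0=v(\emptyset)$, so $\vecp$ is stable. The associated supplemental payment is $\Delta=p(I)-v(I)=nM-v(I)\le nM$, which together with the LP characterization yields $\CoS(G)\le nM$.

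For the lower bound, fix an arbitrary coalition structure $\CS=(C^1,\dots,C^m)\in\calCS(I)$, and let $\vecp$ be any imputation in the core of $G(\CoS(G))$ (which exists by the LP argument). Summing the core constraints $p(C^j)\ge v(C^j)$ over $j=1,\dots,m$ and using that the $C^j$ partition $I$, we obtain
$$v(I)+\CoS(G)=p(I)=\sum_{j=1}^m p(C^j)\ge \sum_{j=1}^m v(C^j)=v(\CS),$$
so $\CoS(G)\ge v(\CS)-v(I)$. Taking the maximum over $\CS\in\calCS(I)$ gives the desired lower bound.

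Both directions are essentially routine once the LP characterization of $\CoS(G)$ is in hand; the only mild subtlety is making sure we invoke the fact that the infimum in Definition~\ref{l_def_cost_of_stability} is attained, so that we may speak of an actual stable super-imputation in the lower-bound argument rather than a sequence of approximately stable ones. No step appears to present a genuine obstacle.
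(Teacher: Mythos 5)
Your proof is correct and follows essentially the same route as the paper, which establishes both bounds in the discussion preceding the theorem: the uniform payment $p_i=\max_{C\subseteq I}v(C)$ for the upper bound, and summing the core constraints over the parts of an arbitrary coalition structure for the lower bound. Your explicit appeal to the LP characterization to justify that the infimum is attained is a reasonable (and slightly more careful) way to phrase what the paper leaves implicit.
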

For super-additive games, we can strengthen the upper bound considerably.
Note that in such games the grand coalition maximizes social welfare, 
so its stability is particularly
desirable. Yet, as the second part of Theorem~\ref{thm:superadd} implies, 
ensuring stability may turn out to be quite costly even in this restricted
setting.

\begin{theorem}\label{thm:superadd}
For any super-additive game $G=(I, v)$,
$|I|=n$, we have $\CoS(G)\le(\sqrt{n}-1)v(I)$,
and this bound is asymptotically tight.
\end{theorem}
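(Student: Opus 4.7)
The plan is to deduce both halves of the theorem from LP duality applied to $\lp$. The dual of $\lp$ (which is the Bondareva--Shapley LP for the adjusted game) assigns a nonnegative weight $\lambda_C$ to every $C\subseteq I$ subject to the \emph{balancedness} constraints $\sum_{C\ni i}\lambda_C=1$ for each $i\in I$, and strong duality gives $v(I)+\CoS(G)=\max\sum_C\lambda_C v(C)$. Hence the upper bound reduces to showing $\sum_C\lambda_C v(C)\le\sqrt{n}\,v(I)$ for every balanced collection $\lambda$, under the assumption that $v$ is super-additive.

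The first step is to reduce to balanced $\lambda$ whose support is a pairwise intersecting family. Among all dual optima---a compact subset of the feasible polytope, since $0\le\lambda_C\le 1$ on nonempty coalitions---I would pick one $\lambda^*$ that additionally maximizes the potential $\Phi(\lambda)=\sum_C\lambda_C|C|^2$. If two disjoint coalitions $C_1,C_2$ appeared in the support of $\lambda^*$, the ``merge'' operation that subtracts $\delta=\min(\lambda^*_{C_1},\lambda^*_{C_2})$ from each of $\lambda^*_{C_1}$ and $\lambda^*_{C_2}$ and adds $\delta$ to $\lambda^*_{C_1\cup C_2}$ would preserve balancedness (each $i\in C_1\cup C_2$ loses $\delta$ from exactly one of $\lambda^*_{C_1},\lambda^*_{C_2}$ and regains it from $\lambda^*_{C_1\cup C_2}$), weakly increase the objective by $\delta(v(C_1\cup C_2)-v(C_1)-v(C_2))\ge 0$ via super-additivity, and strictly increase $\Phi$ by $2\delta|C_1||C_2|>0$. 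The last would contradict the maximal choice of $\lambda^*$, so no disjoint pair exists in the support.

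The second step is a short counting inequality. Writing $L=\sum_C\lambda^*_C$ and using $|C\cap C'|\ge 1$ on the support together with balancedness,
\[
L^2 = \sum_{C,C'}\lambda^*_C\lambda^*_{C'} \;\le\; \sum_{C,C'}\lambda^*_C\lambda^*_{C'}\,|C\cap C'| = \sum_{i\in I}\Bigl(\sum_{C\ni i}\lambda^*_C\Bigr)^2 = n,
\]
so $L\le\sqrt{n}$. Monotonicity of $v$ (apply super-additivity to $C$ and $I\setminus C$, then use $v\ge 0$) gives $v(C)\le v(I)$ for every $C$, hence $\sum_C\lambda^*_C v(C)\le v(I)\cdot L\le\sqrt{n}\,v(I)$, which is the desired upper bound.

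For asymptotic tightness I would use the projective plane of order $q$, which exists for every prime power $q$ and has $n=q^2+q+1$ points, the same number of lines of size $q+1$, each point on $q+1$ lines, and any two lines meeting in exactly one point. Let $v$ be the super-additive closure of the $\{0,1\}$-valued set function that equals $1$ precisely on the lines; since no two lines are disjoint, no two ``line-containing'' coalitions are disjoint either, giving $v(I)=1$. Assigning $\lambda_L=1/(q+1)$ to each line is balanced and produces objective $(q^2+q+1)/(q+1)=q+1/(q+1)$, so $\CoS(G)\ge q-1+1/(q+1)$, matching $(\sqrt{n}-1)v(I)=\sqrt{q^2+q+1}-1$ up to $o(1)$ as $q\to\infty$. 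The main obstacle is step one---arguing cleanly that the optimum dual solution can be assumed to have intersecting support; once that reduction is in hand, both the counting bound and the projective-plane construction require very little further work.
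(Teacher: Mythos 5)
Your proposal is correct and follows the same overall architecture as the paper's proof---pass to the dual of $\lp$, reduce to an optimal dual solution with pairwise-intersecting support via a merge operation justified by super-additivity, and use a projective plane for tightness---but two of your ingredients are genuinely different and, in my view, cleaner. First, where the paper forces intersecting support by choosing, among dual optima, one whose vector of zero-coordinates is lexicographically largest under a size-respecting order, you instead maximize the secondary potential $\Phi(\lambda)=\sum_C\lambda_C|C|^2$ over the (compact) optimal face and observe that merging disjoint $C_1,C_2$ strictly increases $\Phi$ by $2\delta|C_1||C_2|$; this is a tidier tie-breaking device that avoids the bookkeeping with orders on tuples of subsets. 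Second, and more substantially, your final step replaces the paper's two-case analysis (on whether some supported set has size at most $\sqrt{n}$) with the single quadratic counting inequality $\bigl(\sum_C\lambda_C\bigr)^2\le\sum_{C,C'}\lambda_C\lambda_{C'}|C\cap C'|=\sum_i\bigl(\sum_{C\ni i}\lambda_C\bigr)^2\le n$, which isolates the clean combinatorial fact that any pairwise-intersecting fractional family has total dual mass at most $\sqrt{n}$; combined with $v(C)\le v(I)$ this gives the bound in one line. One small correction: the dual of the primal as written (with $p_i\ge 0$) has the \emph{inequality} constraints $\sum_{C\ni i}\lambda_C\le 1$ rather than the balancedness equalities you state (equalities can be recovered by padding with singletons since $v\ge 0$, but you do not need this); with inequalities your middle identity becomes ``$\le n$,'' which is all the argument requires, and your tightness computation is unaffected since you only invoke weak duality there. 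The projective-plane construction is the paper's, modulo your use of the dual certificate versus the paper's direct summation of primal constraints; both yield $\CoS(G)\ge(\sqrt{n}-O(1))\,v(I)$.
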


\begin{proof}
Fix an arbitrary monotone super-additive game $G=(I, v)$ with 
$v(\emptyset)=0$ and $|I|=n$. 
Consider the corresponding
linear program $\lp$.
Observe that it can be re-written as 
\begin{eqnarray*}
\min \sum_{i\in I}p_i & & \textrm{subject to:}\\
p_i &\ge&  0 \textrm{\quad for }i=1, \dots, n,\\
\sum_{i\in C}p_i &\ge&  v(C) \textrm{\quad for all }C\subseteq I.
\end{eqnarray*}
The dual to this linear program 
has $2^n$ variables $\{\lambda_C\}_{C\subseteq I}$ and is given by
\begin{eqnarray*}
\max \sum_{C\subseteq I}v(C)\lambda_C & & \textrm{subject to:}\\
\lambda_C &\ge&  0 \textrm{\quad for all }C\subseteq I,\\
\sum_{C: i \in C}\lambda_C &\le&  1 \textrm{\quad for }i=1, \dots, n.
\end{eqnarray*}
That is, we have to assign ``weights'' $\lambda_C$ to all 
coalitions so that the total weight of all coalitions covering any given point is at most $1$. Our goal is to maximize
$\sum_{C\subseteq I}v(C)\lambda_C$ subject to this condition.
 
First, we claim that there exists an optimal solution
to this maximization problem that satisfies $S\cap T\neq\emptyset$
for any nonempty sets $S, T \subseteq I$ 
such that $\lambda_S>0$ and $\lambda_T>0$. 
For a contradiction, suppose that this is not the case.
Fix an arbitrary order $\prec$ on coalitions in $2^I$ such that $|S|<|T|$ implies $S\prec T$, 
and extend it to a lexicographic order on tuples of subsets of $I$ in the standard manner.
For every optimal solution $(\lambda_C)_{C\subseteq I}$ to the dual program, consider the vector 
$\xi_{(\lambda_C)_{C\subseteq I}}$ whose entries are
the subsets $C\subseteq I$ with $\lambda_C=0$, ordered according to $\prec$
(from the smallest to the largest). 
Among all optimal solutions
to the dual linear program, pick one with the lexicographically largest such vector 
and denote it by $(\lambda^*_C)_{C\subseteq I}$.   
By our assumption, there exists a pair $(S, T)$ of nonempty sets such that
$\lambda^*_S>0$ and $\lambda^*_T>0$, but $S\cap T=\emptyset$. 
Let $\eps=\min\{\lambda^*_S, \lambda^*_T\}$. Consider the vector 
$(\lambda^{**}_C)_{C\subseteq I}$ 
given by 
$$
\lambda_C^{**}=
\begin{cases}
\lambda_C^* &\text{ for }C\neq S, T, S\cup T, \\
\lambda_C^*-\eps &\text{ for }C= S, T,\\
\lambda_C^*+\eps &\text{ for }C= S\cup T.
\end{cases}
$$
First, observe that since $S$ and $T$ are disjoint,  $(\lambda^{**}_C)_{C\subseteq I}$ 
is also a feasible solution to the dual program.
Furthermore, by super-additivity we have
\begin{align*} 
\sum_{C\subseteq I}v(C)\lambda^{**}_C=
\sum_{C\subseteq I}v(C)\lambda^*_C-v(S)\eps-v(T)\eps+(v(S)+v(T))\eps\ge
\sum_{C\subseteq I}v(C)\lambda^*_C, 
\end{align*}
so $(\lambda^{**}_C)_{C\subseteq I}$ is an optimal solution to the dual program, too. 
Finally, observe that $\xi_{(\lambda^{**}_C)_{C\subseteq I}}$ is lexicographically 
greater than $\xi_{(\lambda^*_C)_{C\subseteq I}}$. Indeed, 
assume that $\eps=\lambda^*_S$ 
(a similar argument works for $\eps= \lambda^*_T$).
Then, for
$C\neq S, T, S\cup T$, 
we have $\lambda^*_C=\lambda^{**}_C$, and, moreover, 
$\lambda^*_{S}\neq 0$, $\lambda^{**}_{S}= 0$ 
and $|S\cup T|>|S|$.
This is a contradiction with our choice of $(\lambda^*_C)_{C\subseteq I}$.

Thus, there is an optimal solution $(\lambda_C)_{C\subseteq I}$ 
in which any two sets $C$ and $C'$ with 
$\lambda_C\neq 0$ and $\lambda_{C'}\neq 0$
intersect. Now, suppose that there is a set $S$ with $|S|\le \sqrt{n}$, 
$\lambda_{S}>0$. Any set $T$ with $\lambda_{T}>0$
contains one of the points in $S$. Thus, 
we have 
$$
\sum_{C\subseteq I}\lambda_Cv(C)\le v(I)\sum_{i\in S}\sum_{T: i\in T}\lambda_T
\le v(I)\sum_{i\in S}1\le \sqrt{n}v(I).
$$
On the other hand, if for any $C$ with $\lambda_C>0$ it holds that $|C|>\sqrt{n}$, 
we have 
$$
\sqrt{n}\sum_{C\subseteq I}\lambda_Cv(C)\le
\sum_{C\subseteq I}\lambda_Cv(C)|C|\le 
 v(I)\sum_{i\in I}\sum_{C: i\in C}\lambda_C\le nv(I), 
$$ 
so $\sum_{C\subseteq I}\lambda_Cv(C)\le nv(I)/\sqrt{n}=\sqrt{n}v(I)$.
Consequently, in both cases we have $\sum_{C\subseteq I}\lambda_Cv(C)\le \sqrt{n}v(I)$.
Now, since the optima of the 
dual and the original linear programs are equal, 
the optimal solution 
$(p_1, \dots, p_n)$ to the original linear program
satisfies $\sum_{i\in I}p_i\le \sqrt{n}v(I)$,
and hence $\CoS(G)\le (\sqrt{n}-1)v(I)$, as required.

To see that this bound is asymptotically
tight, consider a finite projective plane $P$ of order $q$, 
where $q$ is a prime number. It has $q^2+q+1$ points and the same number of lines, 
every line contains $q+1$ points, 
any two lines intersect, and any point belongs to exactly $q+1$ lines. Now, consider 
a simple coalitional game $G'$ whose players correspond to 
the points in $P$ and whose winning
coalitions correspond to the
sets of points in $P$ that contain a line. Observe that this game
is super-additive: since any two lines intersect, there do not exist two disjoint 
winning coalitions. Hence, for any $S, T\subseteq I$ such that $S\cap T=\emptyset$
either $v(S)=0$ or $v(T)=0$, and therefore $v(S)+v(T)\le v(S\cup T)$, as required.
On the other hand, for each line $C$, we have $\sum_{i\in C}p_i\ge 1$. Summing over all 
$q^2+q+1$ lines, 
and using the fact that each point belongs to $q+1$ lines, we obtain 
$(q+1)\sum_{i\in I}p_i\ge q^2+q+1$, i.e., $p(I)=\frac{q^2+q+1}{q+1}=q+\frac{1}{q+1}$.
Since $n=|I|=q^2+q+1$, we have $q\ge \sqrt{n}-1$, i.e., 
$\CoS(G')\ge (\sqrt{n}-2)v(I)$.~\end{proof}

For anonymous super-additive games, further improvements are possible.

\begin{theorem}\label{thm:superadd-anon}
For any anonymous super-additive game $G=(I, v)$,
we have $\CoS(G)\le 2v(I)$,
and this bound is asymptotically tight.
\end{theorem}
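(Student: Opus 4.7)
My plan is to exhibit a simple uniform stabilizing super-imputation whose total payout is at most $2v(I)$, using anonymity together with iterated super-additivity to verify stability, and to establish asymptotic tightness using the uniform WVGs of Theorem~\ref{thm:identical-weights}. Write $f(k):=v(C)$ for any $C\subseteq I$ with $|C|=k$, which is well-defined by anonymity. The first observation is that super-additivity together with $v\ge 0$ implies monotonicity, and iterating super-additivity on $m$ pairwise disjoint size-$k$ coalitions contained in $I$ gives $m\,f(k)\le f(mk)\le v(I)$ whenever $mk\le n$. Choosing $m=\lfloor n/k\rfloor$ yields the key estimate
\[
f(k)\;\le\;\frac{v(I)}{\lfloor n/k\rfloor}\qquad\text{for every }k\in\{1,\dots,n\}.
\]

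Next I would propose the uniform super-imputation $\vecp$ defined by $p_i=2v(I)/n$ for each $i\in I$. Its total payout equals $2v(I)$, so it is an imputation for $G(\Delta)$ with $\Delta=v(I)$; hence if I show it is in the core of $G(\Delta)$, I obtain $\CoS(G)\le v(I)\le 2v(I)$. For any $C$ with $|C|=k$, stability $p(C)\ge f(k)$ follows from the key estimate provided $2k\lfloor n/k\rfloor\ge n$, and this reduces to a two-line case split: for $k\le n/2$ one has $k\lfloor n/k\rfloor>n-k\ge n/2$, and for $k>n/2$ one has $\lfloor n/k\rfloor=1$ together with $k>n/2$. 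Hence $\vecp$ is stable.

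For asymptotic tightness, I would take the uniform weighted voting game $G_n:=[1,1,\dots,1;\,q_n]$ on $n$ players with threshold $q_n=\lceil n/2\rceil+1$. It is trivially anonymous, and it is super-additive because $2q_n>n$ rules out two disjoint winning coalitions (so $v(S)+v(T)\le 1=v(S\cup T)$ whenever $S\cap T=\emptyset$ and $S\cup T$ wins). Theorem~\ref{thm:identical-weights} then gives $\CoS(G_n)=n/q_n-1\to 1=v(I)$ as $n\to\infty$, confirming that the $2v(I)$ bound is tight up to a constant factor.

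The main point to watch is the range of $k$ slightly above $n/2$, where iterated super-additivity contributes nothing beyond $f(k)\le v(I)$; here the stability inequality is salvaged by the direct observation $2k>n$, and it is this observation that dictates the choice $p_i=2v(I)/n$ (rather than anything smaller). A natural alternative would be a symmetrization of the dual LP from the proof of Theorem~\ref{thm:superadd}, reducing the dual to $\max_k n\,f(k)/k$; the same inequality $k\lfloor n/k\rfloor>n/2$ closes that route and yields the same bound.
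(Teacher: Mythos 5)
Your proof is correct and follows essentially the same route as the paper's: the same uniform super-imputation $p_i = 2v(I)/n$, with stability verified by packing disjoint copies of $C$ into $I$ and applying super-additivity (your $\lfloor n/k\rfloor$ bookkeeping is equivalent to the paper's choice of the integer $k$ with $n/(k+1)\le|C|<n/k$). For tightness the paper uses the same near-majority anonymous simple game but derives the lower bound directly by double-counting over all coalitions of size $k+1$, whereas you delegate it to Theorem~\ref{thm:identical-weights}; both give $\CoS(G_n)\to v(I)$, which matches the supplemental payment $\Delta=v(I)$ actually established in the upper-bound argument.
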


\begin{proof}
Fix an anonymous super-additive game $G=(I, v)$ with $|I|=n$.
Consider a super-imputation $\vecp=(p_1, \dots, p_n)$ given by
$p_i=\frac{2v(I)}{n}$. Clearly, we have $p(I)=2v(I)$. It remains
to show that $\vecp$ is in the core of the adjusted game $G(v(I))$.

For any coalition $C\subset I$, there exists an integer $k$, $1\le k\le n-1$,
such that $\frac{n}{k+1}\le |C|< \frac{n}{k}$. For this value of $k$, 
one can construct $k$ pairwise disjoint coalitions $C_1, \dots, C_k$ 
with $C_1=C$ and $|C_1|=\dots=|C_k|$.  Super-additivity then implies
that $v(C)\le \frac{v(I)}{k}$. On the other hand, we have 
$$
p(C)=|C|\frac{2v(I)}{n}\ge \frac{n}{k+1}\cdot\frac{2v(I)}{n}=\frac{2v(I)}{k+1}.
$$  
Since $\frac{2v(I)}{k+1}\ge \frac{v(I)}{k}$ for any $k\ge 1$, 
it follows that $p(C)\ge v(C)$ for all $C\subset I$, so $\vecp$ is stable.

To see that this bound is asymptotically
tight, consider a game $G'=(I, v)$ with $|I|=n=2k+1$
given by $v(C)=0$ if $|C|\le k$,
and $v(C)=1$ if $|C|\ge k+1$. Clearly, this game
is anonymous. Moreover, as any two winning coalitions intersect, 
this game is also super-additive. Consider any stable super-imputation $\vecp$
for this game. For any $C$ with $|C|=k+1$, we have $\sum_{i\in C}p_i\ge 1$.
There are exactly ${n\choose {k+1}}$ coalitions of this size, and each 
agent participates in exactly ${{n-1}\choose k}$ such coalitions. 
Thus, summing all these inequalities, we obtain 
${{n-1}\choose k}p(I)\ge {n\choose {k+1}}$, or, canceling, 
$p(I)\ge \frac{n}{k+1}=2-\frac{1}{k+1}$.~\end{proof}

A somewhat similar stability-related concept is the \emph{least core}, 
which is the set of all imputations $\vecp$ that minimize the maximal
\emph{deficit} $v(C)-p(C)$.
In particular, the \emph{value} of the least core $\eps(G)$,
defined as 
\[
\eps(G)=\inf_{\vecp\in\I(G)}\{\max \{v(C)-p(C)\mid C\subseteq I\}\},
\] 
is strictly positive if and only if the cost of stability 
is strictly positive. The following proposition 
provides a more precise description of the relationship
between the value of the least core and the cost of stability.

\begin{proposition}\label{cos-least-core}
For any coalitional game $G=(I, v)$ with $|I|=n$
such that $\eps(G)\ge 0$, we have 
$\CoS(G)\le n\eps(G)$, and this bound is asymptotically tight.
\end{proposition}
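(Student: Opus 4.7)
The plan is to prove the upper bound by exhibiting an explicit stable super-imputation obtained by a uniform shift of a least-core imputation, and then establish asymptotic tightness via the symmetric weighted voting game from Theorem~\ref{thm:identical-weights}.

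For the upper bound, I would fix an imputation $\vecp^* \in \I(G)$ realizing $\eps(G)$, so that $p^*(C) \ge v(C) - \eps(G)$ for every $C \subseteq I$, and define $\vecp' = (p^*_1 + \eps(G), \dots, p^*_n + \eps(G))$. The verification that $\vecp'$ is a stable super-imputation is essentially one line: for any nonempty $C$, $p'(C) = p^*(C) + |C|\eps(G) \ge v(C) + (|C|-1)\eps(G) \ge v(C)$ because $\eps(G) \ge 0$ and $|C| \ge 1$; the empty coalition is trivial; and $p'_i \ge 0$ follows from $p^*_i \ge v(\{i\}) \ge 0$. Since $p'(I) = v(I) + n\eps(G)$, the vector $\vecp'$ lies in $\core(G(n\eps(G)))$, establishing $\CoS(G) \le n\eps(G)$.

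For asymptotic tightness, I would use the uniform WVG $G_n = [1, \dots, 1; 2]$, for which Theorem~\ref{thm:identical-weights} gives $\CoS(G_n) = n/2 - 1$. To evaluate $\eps(G_n)$, I would exploit the permutation symmetry of the game: the maximum deficit $\max_C(v(C)-p(C))$ is a convex function of $\vecp$, so averaging any imputation over the symmetric group only weakly decreases it, and the resulting average is the uniform imputation $p_i = 1/n$. Its worst deficit is attained by pairs and equals $1 - 2/n$, so $\eps(G_n) = (n-2)/n$ and $n\eps(G_n) = n-2$. Therefore $\CoS(G_n) = n\eps(G_n)/2$ for every $n \ge 3$, witnessing that the bound is tight up to a constant factor. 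The main subtlety is precisely this symmetrization step that pins down $\eps(G_n)$; the upper bound itself is immediate once the uniform-shift trick is recognized, and the rest of the tightness calculation is routine.
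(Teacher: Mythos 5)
Your upper-bound argument is correct and is exactly the paper's: take a least-core imputation $\vecp^*$ with $p^*(C)\ge v(C)-\eps(G)$ for all $C$, shift every coordinate by $\eps(G)$, and observe $p'(C)\ge v(C)+(|C|-1)\eps(G)\ge v(C)$ for nonempty $C$, so $\vecp'\in\core(G(n\eps(G)))$. (Both you and the paper tacitly assume the infimum defining $\eps(G)$ is attained; that is standard and not a gap relative to the paper.)

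The tightness part, however, falls short of what is claimed. Your example $G_n=[1,\dots,1;2]$ is computed correctly --- the symmetrization argument does give $\eps(G_n)=(n-2)/n$, and Theorem~\ref{thm:identical-weights} gives $\CoS(G_n)=n/2-1$ --- but this yields $\CoS(G_n)=\tfrac12\, n\eps(G_n)$ for every $n$, i.e., tightness only up to a multiplicative constant of $2$. That shows the $n$-dependence cannot be improved, but it leaves open whether the bound could be strengthened to $\CoS(G)\le \tfrac12 n\eps(G)$, which is precisely what ``asymptotically tight'' is meant to rule out. The paper instead reuses the game with $v(\emptyset)=0$ and $v(C)=1$ for all $C\neq\emptyset$ (already introduced for the $n\max_C v(C)$ bound): there each singleton must receive $1$ in any stable super-imputation, so $\CoS(G)=n-1$, while the uniform payoff vector $(\tfrac1n,\dots,\tfrac1n)$ shows $\eps(G)=\tfrac{n-1}{n}$, giving $\CoS(G)=n\eps(G)$ with \emph{exact} equality for every $n$. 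You should replace your witness by this game (or another achieving ratio tending to $1$) to actually establish the stated tightness; the rest of your argument can stay as is.
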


\begin{proof}
Clearly, if $\eps(G)=0$, we have $\CoS(G)=0$. Now, assume $\eps(G)>0$.
Let $\vecp$ be an imputation in the least core of $G$. For any
$C\subseteq I$, we have $p(C)\ge v(C)-\eps(G)$.
Consider a super-imputation $\vecp^*$ given by $p^*_i=p_i+\eps(G)$.
Clearly, we have $p^*(C)\ge v(C)$ for any $C\subseteq I$ such that
$C\neq\emptyset$,  
i.e., $\vecp^*$ is stable. Further, it is easy to see that
$p^*(I)=v(I)+n\eps(G)$, so $\CoS(G)\le n\eps(G)$. 

To see that this bound is asymptotically
tight, reconsider the game $G = (I,v)$ with $|I|
= n$, $v(\emptyset) = 0$, and $v(C) = 1$ for all $C \neq \emptyset$. It
is easy to see that $\eps(G) = \frac{n-1}{n}$, since the imputation
$(\frac{1}{n}, \ldots, \frac{1}{n})$ is in the least core of $G$. On the
other hand, as mentioned above, $\CoS(G) = n-1 = n\eps(G)$.~\end{proof}

\subsection{Algorithmic Properties of $\CoS(G)$}
\label{sec:compute}

The linear program $\lp$ 
provides a way of computing $\CoS(G)$ for any coalitional
game $G$. However, this linear program contains exponentially many constraints
(one for each subset of $I$). Thus, solving it directly would be too
time-consuming for most games. Note that for general 
coalitional games, this is, in a sense, inevitable: 
in general, a coalitional game is described
by its characteristic function, i.e., a list of $2^n$ numbers.
Thus, to discuss the algorithmic properties of $\CoS(G)$, 
we need to restrict our attention to games with compactly representable
characteristic functions. 

A standard approach to this issue is to consider 
games that can be described by polynomial-size circuits.
Formally, we say that a class $\calG$ of games has a 
\emph{compact circuit representation} if there exists
a polynomial $p$ such that for every $G\in \calG$,  
$G=(I, v)$, $|I|=n$, there exists a circuit $\calC$
of size $p(n)$ with $n$
binary inputs that on input $(b_1, \dots, b_n)$ outputs
$v(C)$, where $C=\{i\in I\mid b_i=1\}$.

Unfortunately, it turns out that having a compact
circuit representation does not guarantee efficient
computability of $\CoS(G)$. Indeed, it is easy to see that
WVGs with integer weights have such a representation.
However, in the next section we will show that computing
$\CoS(G)$ for such games is computationally intractable 
(Theorem~\ref{thm:cos-hard}). We can, however, provide a \emph{sufficient}
condition for $\CoS(G)$ to be efficiently computable. To do so, 
we will first formally state the relevant computational problems.

\smallskip

\noindent {\sc Super-Imputation-Stability}:\ 
Given a coalitional game $G$ (compactly represented by a circuit),
a supplemental payment
$\Delta$ and an imputation $\vecp=(p_1,\ldots,p_n)$ in the adjusted
game $G(\Delta)$, decide whether $\vecp\in\core(G(\Delta))$.

\smallskip

\noindent {\sc CoS}:\ 
Given a coalitional game $G$ (compactly represented by a circuit)
and a parameter $\Delta$,
decide whether $\CoS(G)\le \Delta$, i.e., whether
$\core(G(\Delta))\neq\emptyset$.

\smallskip

Consider first {\sc Super-Imputation-Stability}. 
Fix a game $G=(I, v)$. For any super-imputation $\vecp$ for $G$, 
let $d(G, \vecp)=\max_{C\subseteq I}(v(C)-p(C))$ be the maximum
deficit
of a coalition under $\vecp$. Clearly, $\vecp$ is stable
if and only if $d(G, \vecp)\le 0$. Observe also that for any $\Delta>0$
it is easy to decide whether $\vecp$ is an imputation for $G(\Delta)$.
Thus, a polynomial-time algorithm for computing $d(G, \vecp)$ can be converted
into a polynomial-time algorithm for {\sc Super-Imputation-Stability}.
Further, we can decide {\sc CoS} via solving $\lp$ by the ellipsoid method. 
The ellipsoid method runs in polynomial
time given a polynomial-time \emph{separation oracle}, i.e., a procedure
that takes as input a candidate feasible solution,
checks if it indeed is feasible, and if this is not the case,
returns a violated constraint. Now, given a vector $\vecp$
and a parameter $\Delta$,
we can easily check if they satisfy constraints (1)--(3),
i.e., if $\vecp$ is an imputation for $G(\Delta)$. To verify
constraint (4), we need to check if $\vecp$ is in the core
of $G(\Delta)$. As argued above, this can be done by checking
whether $d(G, \vecp)\le 0$. We summarize these results as follows.

\begin{theorem}
Consider a class of coalitional games $\calG$ with
a compact circuit representation.
If there is an algorithm that for any $G\in \calG$, $G=(I, v)$, 
$|I|=n$, and for any super-imputation $\vecp$ for $G$
computes $d(G, \vecp)$ in time $\poly(n, |\vecp|)$, 
where $|\vecp|$ is the number of bits in the binary representation of $\vecp$, 
then for any $G\in\calG$
the problems {\sc Super-Imputation-Stability} and {\sc CoS}
are polynomial-time solvable.
\end{theorem}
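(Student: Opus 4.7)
My plan is to handle the two problems separately, leveraging the oracle for $d(G,\vecp)$ in both cases, with the only nontrivial ingredient being the ellipsoid method for \textsc{CoS}.

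For \textsc{Super-Imputation-Stability}, the argument is essentially a direct reduction. Given the input $(G,\Delta,\vecp)$, I would first verify in polynomial time the easy feasibility conditions that make $\vecp$ an imputation for $G(\Delta)$: nonnegativity $p_i\ge 0$ and the equality $\sum_i p_i = v(I)+\Delta$ (the value $v(I)$ can be read off by evaluating the circuit on the all-ones input). Then I would invoke the hypothesized algorithm to compute $d(G,\vecp)$ in time $\poly(n,|\vecp|)$ and output "yes" iff $d(G,\vecp)\le 0$, since by definition this is equivalent to $p(C)\ge v(C)$ for every $C\subseteq I$, i.e.\ $\vecp\in\core(G(\Delta))$.

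For \textsc{CoS}, the plan is to solve $\lp$ by the ellipsoid method. The linear program has only $n+1$ variables but exponentially many constraints, so direct solution is infeasible; however, the ellipsoid method runs in time polynomial in the number of variables and the running time of a separation oracle. I would design the separation oracle as follows: given a candidate $(\vecp,\Delta)$, first check the polynomially many constraints (1)--(3) directly, returning any violated one; otherwise, call the $d(G,\vecp)$ subroutine and, if $d(G,\vecp)>0$, recover (from the same subroutine, which must in fact identify a maximally deficient coalition) a coalition $C^*$ with $p(C^*)<v(C^*)$ and return $\sum_{i\in C^*}p_i\ge v(C^*)$ as the violated constraint. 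The oracle runs in time $\poly(n,|\vecp|,|\Delta|)$, and a standard bound on the bit-complexity of vertices of $\lp$ in terms of $n$ and the size of the circuit representation of $v$ guarantees that the ellipsoid method converges in polynomial time to an optimal $(\vecp^*,\Delta^*)$ with $\Delta^*=\CoS(G)$; comparing $\Delta^*$ to the input $\Delta$ answers \textsc{CoS}.

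The main subtlety I anticipate is the assumption that the $d(G,\vecp)$ oracle not only returns the optimal deficit value but also a witnessing coalition $C^*$, since the separation oracle must output an actual violated inequality rather than just a yes/no answer. This is a mild strengthening of the stated hypothesis, but any natural algorithm for computing $d(G,\vecp)$ via optimization over $2^I$ will produce such a witness, and one can always recover $C^*$ from the deficit value by standard self-reducibility: fix the indicator bits of $C^*$ one at a time, each time re-invoking the oracle on the restricted game. A second minor issue is ensuring that the ellipsoid method's polynomial running time is controlled by the bit-size of the circuit for $v$; this follows because all coefficients appearing in constraints (1)--(4) are values of $v$, whose magnitudes are bounded by quantities computable from the circuit, giving the usual polynomial bound on the encoding length of the LP.
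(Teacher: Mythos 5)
Your proposal is correct and follows essentially the same route as the paper: decide \textsc{Super-Imputation-Stability} by checking the imputation conditions and testing $d(G,\vecp)\le 0$, and decide \textsc{CoS} by running the ellipsoid method on $\lp$ with the $d(G,\vecp)$ subroutine serving as the separation oracle. In fact you are slightly more careful than the paper, which glosses over the need for the oracle to return an explicit violated constraint; your observation that a witnessing coalition can be recovered by self-reducibility (or is produced by any natural deficit-maximization algorithm) correctly fills that small gap.
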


We mention in passing that for games 
with poly-time computable characteristic functions
both problems are in $\conp$. 
For {\sc Super-Imputation-Stability}, the membership is trivial;
for {\sc CoS}, it follows
from the fact that the game $G(\Delta)$ has a
poly-time computable characteristic function 
as long as $G$ does, 
and hence we can apply the results of~\cite{malizia07} 
(see the proof of 
Theorem~\ref{thm:cos-hard} for details).

\section{Cost of Stability in WVGs
Without Coalition Structures}

In this section, we focus on computing the cost of stabilizing the grand
coalition in WVGs. We start by considering the complexity of exact algorithms
for this problem. 
\subsection{Exact Algorithms}
\label{l_exact}

In what follows, unless specified otherwise, 
we assume that all weights and the threshold are integers given in binary, 
whereas all other numeric parameters, such as the supplemental payment
$\Delta$ and the entries of the payoff vector $\vecp$, are rationals given in 
binary. Standard results on linear threshold
functions~\cite{muroga} imply that WVGs with integer weights have a compact
circuit representation. Thus, we can define the computational problems
{\sc Super-Imputation-Stability-WVG} and {\sc CoS-WVG} 
by specializing the problems
{\sc Super-Imputation-Stability} and {\sc CoS} to WVGs.
Both of the resulting problems turn out to be computationally hard.

\begin{theorem}\label{thm:cos-hard}
The problems {\sc Super-Imputation-Sta\-bi\-li\-ty-WVG}
and {\sc CoS-WVG} are $\conp$-com\-plete.
\end{theorem}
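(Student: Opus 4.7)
The plan is to establish, for each of the two problems, membership in $\conp$ and $\conp$-hardness.

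For $\conp$-membership, a ``no'' certificate for \textsc{Super-Imputation-Stability-WVG} is simply any blocking coalition $C$: given $C$, verifying $w(C) \ge q$ (so $v(C)=1$) and $p(C)<1$ is immediate in the binary input size. For \textsc{CoS-WVG}, I would invoke LP duality on $\lp$ in the style of Malizia, Greco, and Scarcello cited just before the theorem: the dual LP has $n+1$ constraints, so it attains its optimum at a basic feasible solution whose support has at most $n+1$ coalitions. If $\CoS(G) > \Delta$, such a dual solution has objective value strictly greater than $v(I)+\Delta$, it can be encoded in size polynomial in $n$ and the binary weights, and its feasibility and objective value are poly-time verifiable; since $G(\Delta)$ inherits a polynomial-time characteristic function from $G$, the cited result applies and gives $\conp$-membership.

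For $\conp$-hardness of \textsc{Super-Imputation-Stability-WVG}, I would reduce from the complement of \textsc{Subset-Sum}. Given an instance $(a_1,\dots,a_n;K)$ with $S=\sum_i a_i$, the plan is to construct a WVG on the $n$ original players together with two calibrated auxiliary players, with threshold tuned so that (i) the grand coalition just barely wins, (ii) a coalition missing exactly one auxiliary player wins precisely when the $[n]$-part of the coalition hits one of the two critical weight thresholds linked to $K$ and $S-K$, and (iii) $[n]$ alone is not winning. Choosing $\Delta=0$ and a super-imputation that pays nothing to the $n$ original players and splits $v(I)$ equally between the two auxiliaries, a routine case analysis then yields that a blocking coalition exists precisely when some subset of $\{a_i\}$ sums exactly to $K$, which gives the hardness.

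For $\conp$-hardness of \textsc{CoS-WVG} the same overall idea is adapted, but the reduction must encode the subset-sum condition through the \emph{optimum} of the LP $\lp$ rather than through a pre-specified imputation. I would augment the previous construction with additional balancing players of carefully chosen weights so that every optimal super-imputation is essentially forced, and the total payoff of any such super-imputation crosses a target threshold $\Delta'$ precisely when a subset-sum witness exists; the decision instance then asks whether $\CoS(G)\le \Delta'-1$ (or a suitably scaled equivalent). This is the main obstacle: in the previous reduction we only had to rule out the stability of one specific imputation, whereas here we must control the exponential-size LP tightly enough that the subset-sum condition shows up as a discrete jump in $\CoS(G)$, and we must ensure that no unintended near-optimal super-imputation can close that gap. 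Establishing this gap cleanly is where the bulk of the technical work lies.
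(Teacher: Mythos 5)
Your $\conp$-membership arguments are fine and essentially match the paper's (a blocking coalition certifies instability, and the Malizia et al.\ machinery handles {\sc CoS-WVG}). The problem is the hardness half. For {\sc Super-Imputation-Stability-WVG} you sketch a reduction with two ``calibrated'' auxiliary players whose details are never pinned down, and for {\sc CoS-WVG} you explicitly stop at the hard part: you say that controlling the exponential-size LP tightly enough to produce a ``discrete jump'' in $\CoS(G)$ is ``where the bulk of the technical work lies,'' and you do not do that work. As written, the coNP-hardness of {\sc CoS-WVG} is not proved, and that is a genuine gap, not a routine detail.

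The difficulty you anticipate largely dissolves with the right construction, and no auxiliary players are needed. Reduce from {\sc Partition}: given $(a_1,\dots,a_n)$ with $\sum_i a_i=2K$, take $w_i=a_i$, $q=K$, and $\Delta=\frac{K-1}{K+1}$. If a partition into $I'$ and $I''$ of weight $K$ each exists, these are two \emph{disjoint} winning coalitions, so \emph{every} stable super-imputation must pay total at least $2>1+\Delta=\frac{2K}{K+1}$; this is a lower bound on the whole feasible region of $\lp$, so there are no ``unintended near-optimal super-imputations'' to worry about. If no partition exists, then by integrality every winning coalition has weight at least $K+1$, so the weight-proportional vector $p^*_i=\frac{w_i}{K+1}$ pays every winning coalition at least $1$ and has total exactly $\frac{2K}{K+1}=1+\Delta$, witnessing $\CoS(G)\le\Delta$. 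The discrete jump you were trying to engineer comes for free from the integrality of the weights, and the same pair $(G,\Delta,\vecp^*)$ simultaneously gives the hardness of {\sc Super-Imputation-Stability-WVG}, since $\vecp^*$ is in $\core(G(\Delta))$ exactly when the {\sc Partition} instance is a ``no''-instance. So your overall strategy (reduce a subset-sum-type problem, exhibit one forced family of winning coalitions for the lower bound and one explicit stable payoff for the upper bound) is the right one, but you needed to actually instantiate it; the instantiation is simpler than the one you were attempting.
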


\begin{proof}
Both of our reductions will be
from {\sc Partition}, a well-known $\np$-complete
problem~\cite{garey:np}, which is defined as follows:
given a list $A = (a_1,\ldots,a_n)$ of
nonnegative integers such that $\sum_{i=1}^na_i=2K$,
decide whether there is a sublist $A'$  of $A$ such that
$\sum_{a_i \in A'} a_i = K$.

We first show that {\sc CoS-WVG} is coNP-hard.
Given an instance $A = (a_1,\ldots,a_n)$ of {\sc Partition}, we construct
a weighted voting game $G$ by setting $I=\{1, \dots, n\}$, $w_i=a_i$
for each $i$, $1 \leq i \leq n$, and $q=K$. Set $\Delta=\frac{K-1}{K+1}$.
We claim that $(G, \Delta)$ is a ``yes''-instance of {\sc CoS-WVG}
if and only if $A$ is a ``no''-instance of {\sc Partition}.

Indeed, suppose that $A$ is a ``yes''-instance of {\sc Partition},
and let $A'$ be the corresponding sublist. Set $I'=\{i\mid a_i\in A'\}$
and $I''=I\setminus I'$. Suppose for the sake of contradiction
that $G(\Delta)$ has a nonempty core, and let $\vecp$ be an imputation
in the core of $G(\Delta)$. We have $p(I)=\frac{2K}{K+1}<2$, and hence
either $p(I')<1$ or $p(I'')<1$ (or both). On the other hand, since
$\sum_{i\in I'}a_i=K$, we have $w(I')=w(I'')=K=q$, i.e., at least
one of the coalitions $I'$ and $I''$ has a rational incentive to deviate,
a contradiction.

On the other hand, suppose that $A$ is a ``no''-instance of {\sc Partition},
and consider a vector $\vecp^*=(p^*_1, \dots, p^*_n)$, where
$p^*_i=\frac{w_i}{K+1}$. We have $p^*(I)=\frac{2K}{K+1}$, and hence
$p^*(I)-v(I)=\frac{K-1}{K+1}$. That is, $\vecp^*$ is an imputation
for $G(\Delta)$. We now show that $\vecp^*$ is in the core
of $G(\Delta)$ (and thus that $G(\Delta)$ has a nonempty core).
Indeed, consider any coalition $C\subset I$ such that $v(C)=1$.
We have $w(C)\ge q$. Moreover, as $A$
is a ``no''-instance of {\sc Partition},
there is no coalition $C\subset I$ whose weight is exactly $q$,  
so we have $w(C)\ge q+1=K+1$. Thus we have
$p^*(C)= \frac{w(C)}{K+1}\ge 1$. Hence, the agents in $C$
have no rational incentive to deviate from $\vecp^*$
and therefore $\vecp^*\in\core(G(\Delta))$.

We can use the same construction
to show that {\sc Super-Imputation-Stability-WVG} is $\conp$-hard.
Indeed, consider $G$, $\Delta=\frac{K-1}{K+1}$, and $\vecp^*$ defined above.   
It follows from our proof
that $\vecp^*$ is in the core of $G(\Delta)$ if and only if
$A$ is a ``no''-instance of {\sc Partition}.
Moreover, {\sc Super-Imputation-Stability-WVG} is clearly in $\conp$:
to verify that a given super-imputation $\vecp$ is unstable, it
suffices to guess a coalition $C$ and verify that it is winning,
i.e., $w(C)\ge q$, but is paid less than one under~$\vecp$.
Finally, to see that {\sc CoS-WVG} is in $\conp$, observe that
this problem is equivalent to deciding whether the corresponding
game $G(\Delta)$
has a nonempty core. Furthermore, it is easy to see
that $G(\Delta)$ has a {\em polynomial-time
compact representation} in the sense of Definition~3.1 
in~\cite{malizia07}. Thus, Theorem~5.3 in~\cite{malizia07}
implies that deciding whether the core of $G(\Delta)$ 
is nonempty is in $\conp$. Hence, {\sc CoS-WVG} is 
also in $\conp$.~\end{proof}

The reductions in the proof of Theorem~\ref{thm:cos-hard}
are from {\sc Partition}. Consequently, 
our hardness results depend in an essential way on the weights
being given in binary. Thus, it is natural to ask what happens 
if the agents' weights are polynomially
bounded (or given in unary). 
It turns out that in this case the results of Section~\ref{sec:compute}
imply that {\sc Super-Imputation-Stability-WVG} 
and {\sc CoS-WVG} are in~$\p$, since
for WVGs with small weights one can compute $d(G, \vecp)$
in polynomial time.

\begin{theorem}\label{thm:cos-easy} 
{\sc Super-Imputation-Stability-WVG} 
and {\sc CoS-WVG} are in $\p$ when
the agents' weights are polynomially bounded
(or given in unary).
\end{theorem}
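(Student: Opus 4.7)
The plan is to invoke the sufficient condition established in Section~\ref{sec:compute}: if, for every WVG $G$ with polynomially bounded weights and every super-imputation $\vecp$, one can compute the maximum deficit $d(G, \vecp)=\max_{C\subseteq I}(v(C)-p(C))$ in time $\poly(n, |\vecp|)$, then both {\sc Super-Imputation-Stability-WVG} and {\sc CoS-WVG} lie in $\p$. WVGs with integer weights already admit a compact circuit representation (as noted just before Theorem~\ref{thm:cos-hard}), so the condition applies.

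The first step is to simplify $d(G, \vecp)$ for a WVG. Since $v(C)\in\{0,1\}$ and $p_i\ge 0$, any losing coalition $C$ contributes $v(C)-p(C)=-p(C)\le 0$, and the empty coalition contributes $0$. Hence the maximum deficit is either $0$ or is attained by some winning coalition, giving
\[
d(G, \vecp) = \max\bigl(0,\, 1 - M^*\bigr), \qquad M^* = \min\{\,p(C) \mid C\subseteq I,\, w(C)\ge q\,\}.
\]
Thus the computational task reduces to finding a winning coalition whose total payoff is as small as possible.

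The second step is to compute $M^*$ by a standard knapsack-style dynamic program over the weight axis. Let $f(i, j)$ denote the smallest value of $\sum_{k\in C}p_k$ over subsets $C\subseteq\{1,\dots,i\}$ with $\sum_{k\in C}w_k=j$, taking $f(i, j)=+\infty$ if no such subset exists, with base case $f(0, 0)=0$. The transition $f(i, j)=\min\{f(i-1, j),\, f(i-1, j-w_i)+p_i\}$ fills an $(n+1)\times(W+1)$ table in $O(nW)$ arithmetic operations, where $W=w(I)\le n\cdot w_{\max}$. Finally $M^*=\min_{j\ge q}f(n, j)$, and hence $d(G,\vecp)$ is obtained in one additional step.

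When weights are polynomially bounded (or encoded in unary), $W$ is polynomial in $n$, so the DP runs in $\poly(n)$ arithmetic steps. The only subtle point — and not a real obstacle — is bit-complexity: each entry $f(i, j)$ is a sum of at most $n$ of the $p_i$, so its numerator and denominator have bit-length $O(n\cdot|\vecp|)$, keeping the total runtime within $\poly(n, |\vecp|)$. Invoking the sufficient condition from Section~\ref{sec:compute} then yields that both {\sc Super-Imputation-Stability-WVG} and {\sc CoS-WVG} are in $\p$, as required.
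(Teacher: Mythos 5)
Your proposal is correct and follows essentially the same route as the paper: reduce to computing $d(G,\vecp)$ via the sufficient condition of Section~\ref{sec:compute}, then find the minimum payoff of a winning coalition by the same knapsack-style dynamic program over exact coalition weights. Your treatment is, if anything, slightly more careful than the paper's in writing $d(G,\vecp)=\max(0,\,1-M^*)$ (accounting for the empty/losing coalitions) and in bounding the bit-length of the table entries.
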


\begin{proof}
As argued in Section~\ref{sec:compute}, it suffices to show
that given a WVG $G=[\w; q]$  
and a super-imputation $\vecp$ for $G$, 
we can compute $d(G, \vecp)$ in time $\poly(n, w_{\max}, |\vecp|)$, 
where $|\vecp|$ denotes the number of bits in the binary representation of 
$\vecp$.

For any $i$, $1 \leq i \leq n$, and any $w$, $1 \leq w \leq w(I)$,
let
$$
X_{i, w}=\min\{p(C)\mid C\subseteq \{1, \dots, i\}, w(C)=w\}.
$$
We can compute the quantities $X_{i, w}$ inductively as follows.
For $i=1$, we have $X_{i, w}=p_1$ if $w=w_1$, and $X_{i, w}=+\infty$
otherwise. Now, suppose that we have computed $X_{i', w}$
for each $i'$, $1 \leq i' \leq i$. We can then compute $X_{i+1, w}$
as $X_{i+1, w}=\min\{X_{i, w}, p_i+X_{i, w-w_i}\}$.
Observe that $p^*=\min\{X_{n, w}\mid w\ge q\}$
is the minimal payment that a winning coalition in $G$
can receive under $\vecp$.
As $p_i\ge 0$ for all $i$, $1 \leq i \leq n$, 
we have $d(G, \vecp)=1-p^*$.

Clearly, the running time of this algorithm
is polynomial in $n$, $w_{\max}$ and $|\vecp|$.
Observe that one can construct a similar algorithm that runs 
in polynomial time even if the weights are large,
as long as all entries of $\vecp$ can take polynomially
many values.~\end{proof}

\subsection{Approximating the Cost of Stability in Weighted Voting Games}
\label{l_approx}

For large weights, the algorithms outlined at the end of the previous section
may not be practical. Thus, the center may want to trade off its payment
and computation time, i.e., provide a slightly higher supplemental
payment for which the corresponding stable super-imputation can be computed 
efficiently. It turns out that this is indeed possible, i.e., 
$\CoS(G)$ can be efficiently approximated to an arbitrary
degree of precision.

\begin{theorem}\label{thm:fptas}
There exists an algorithm $\calA(G, \eps)$ that, 
given a WVG $G=[\w; q]$ 
in which the weights of all players
are nonnegative integers given in binary
and a parameter $\eps>0$, outputs a value 
$\Delta$ that satisfies $\CoS(G)\le \Delta\le (1+\eps)\CoS(G)$
and runs in time $\poly(n, \log w_{\max}, 1/\eps)$.
That is, there exists a fully polynomial-time approximation scheme
(FPTAS) for $\CoS(G)$.
\end{theorem}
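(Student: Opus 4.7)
The plan is to approximately solve the linear program $\lp$ from Section~\ref{l_def} by running the ellipsoid method with an approximate separation oracle derived from an FPTAS for the minimization knapsack problem.

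First, I would observe that in a WVG the only non-trivial constraints of $\lp$ are $\sum_{i \in C} p_i \geq 1$ for each winning coalition $C$, since $v(C) = 0$ for losing $C$. Thus, deciding feasibility of a candidate super-imputation $\vecp$ reduces to computing the maximum deficit $d(G, \vecp)$, which in a WVG equals $1 - \mu(\vecp)$, where $\mu(\vecp) = \min\{p(C) : C \subseteq I,\ w(C) \geq q\}$ is the minimum total payoff over winning coalitions. Computing $\mu(\vecp)$ is exactly an instance of \emph{minimization knapsack} (select a subset of players with total weight at least $q$ of minimum total payoff), for which a classical FPTAS runs in time $\poly(n, \log w_{\max}, 1/\eps')$ and returns a coalition $C^*$ with $w(C^*) \geq q$ and $p(C^*) \leq (1+\eps')\mu(\vecp)$.

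Second, I would plug this FPTAS into the ellipsoid method as a weak separation oracle for $\lp$. On input $\vecp$, if $p(C^*) < 1$ then $C^*$ is a genuine violated constraint and is returned to the ellipsoid; otherwise $\mu(\vecp) \geq 1/(1+\eps')$, so the scaled vector $(1+\eps')\vecp$ already lies in the core of the corresponding adjusted game. Running the ellipsoid with this oracle (using the global bound $\CoS(G) \leq n-1$ for boundedness, and an interior point such as $p_i = w_i/q$ to initialize, which is always stable) produces, in $\poly(n, \log w_{\max}, 1/\eps')$ time, a super-imputation $\vecp$ whose total $p(I)$ is at most $1 + \CoS(G)$ and whose scaling $(1+\eps')\vecp$ is stable for $G$. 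The corresponding supplemental payment $\Delta = (1+\eps')p(I) - 1$ therefore satisfies $\Delta \leq \CoS(G) + \eps'(1 + \CoS(G))$.

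Finally, I would convert this additive slack into a multiplicative $(1+\eps)$-approximation. If $\CoS(G) = 0$, then by Theorem~\ref{veto} the simple game $G$ must contain a veto agent, which can be detected in linear time by checking whether $w(I\setminus\{i\}) < q$ for some $i$; in that case the algorithm outputs $\Delta = 0$. Otherwise $G$ has no veto agent, so each coalition $I\setminus\{i\}$ is winning, and summing the stability constraints $p(I\setminus\{i\}) \geq 1$ over all $i$ yields $p(I) \geq n/(n-1)$, hence the polynomial lower bound $\CoS(G) \geq 1/(n-1)$. Setting $\eps' = \eps/(2n)$ in the procedure above then gives $\Delta \leq (1+\eps)\CoS(G)$, with total running time polynomial in $n$, $\log w_{\max}$, and $1/\eps$. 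The main technical hurdle will be carefully combining the FPTAS separation oracle with the ellipsoid method: one must verify that the approximate-feasibility guarantees match the prerequisites of the standard ellipsoid-with-weak-separation framework, in particular by bounding the iterates and by exhibiting a ball of inverse-polynomial radius inside the relaxed feasible polytope.
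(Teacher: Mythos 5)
Your overall strategy is the same as the paper's: use the $\CoS(G)>0\Rightarrow\CoS(G)\ge 1/\poly(n)$ lower bound (via veto players) to reduce multiplicative to additive approximation, and obtain the additive approximation by running the ellipsoid method on $\lp$ with a separation oracle that rounds the payoffs and solves the resulting covering-knapsack instance by dynamic programming --- your ``FPTAS for minimization knapsack'' is exactly the paper's rounding-plus-DP oracle in different clothing. The lower bound $\CoS(G)\ge 1/(n-1)$ you derive by summing $p(I\setminus\{i\})\ge 1$ is fine (the paper proves $\CoS(G)\ge 1/n$ by a pigeonhole argument), and the final parameter choice $\eps'=\Theta(\eps/n)$ matches.

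The gap is precisely the step you defer to ``the main technical hurdle,'' and it is not a formality: it is where all the work in this proof lives. First, your claim that a single ellipsoid run with a weak oracle ``produces a super-imputation $\vecp$ whose total $p(I)$ is at most $1+\CoS(G)$'' is an over-claim; an approximate oracle can only deliver the optimum up to some additive slack, and you never specify the mechanism (binary search, sliding objective) by which optimization is reduced to feasibility. Second, and more importantly, when the oracle returns only genuine cuts, the ellipsoid method can legitimately terminate declaring ``infeasible'' on a feasible but lower-dimensional (or tiny-volume) polytope, so you cannot conclude anything from such a run without an interiority argument --- which you name but do not supply. The paper closes exactly this hole by discretizing the objective into a grid of feasibility programs $\calL_1,\dots,\calL_{nX}$ with step $\eps'$ and designing the oracle so that whenever it \emph{fails} to produce a violated constraint for $\calL_k$, it explicitly hands back a certified feasible point of $\calL_{k+1}$ (the rounded vector, whose total increased by at most $n\cdot\eps'/n=\eps'$); the algorithm then halts with that point. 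This way the ellipsoid is never required to correctly certify feasibility of a degenerate polytope, and one only has to check that the reported index lands in $\{k^*,k^*+1\}$. Your plan can almost certainly be repaired along the same lines (indeed, the $\eps'$-relaxed polytope contains a box of side $\eps'/n$ around any feasible point of the unrelaxed one, which gives the volume bound you would need), but as written the crux of the argument is missing rather than proved.
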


\begin{proof}
We start by proving a simple lemma that will be useful
for the analysis of our algorithm.

\begin{lemma}\label{cosge1/n}
For any WVG $G$ such that
$\CoS(G)\neq 0$, we have $\CoS(G)\ge 1/n$.
\end{lemma}
\sproofof{Lemma~\ref{cosge1/n}}
Consider a weighted voting game $G$ that does not have a veto player
and hence $\CoS(G)\neq 0$. Suppose for the sake of contradiction that
$\CoS(G)=\Delta<1/n$, that is, the game $G(\Delta)$ has a nonempty
core. Let $\vecp=(p_1, \dots, p_n)$ be an imputation
in the core of $G(\Delta)$. As we have $v'(I)=\Delta+1>1$, there
must be at least one player $i$ such that $p_i>1/n$.
Hence, $p(I\setminus\{i\})<1+\Delta-1/n<1$. Therefore
the coalition $I\setminus\{i\}$ satisfies
$v(I\setminus\{i\})=1$ (since $i$ is not a veto player),
$p(I\setminus\{i\})<1$, and hence $\vecp$ is not stable,
a contradiction.~\eproofof{Lemma~\ref{cosge1/n}}

Our proof of the theorem
is inspired by the FPTAS
for the value of the least core of WVGs~\cite{conf/aaai/ElkindGGW07}.

We will first describe an additive fully polynomial-time
approximation scheme for $\CoS(G)$, i.e., 
an algorithm $\calA'(G, \eps)$ 
that, given a WVG $G=[w_1, \dots, w_n; q]$
and $\eps>0$, can compute a value $\Delta$ satisfying
$\CoS(G)\le \Delta\le \CoS(G)+\eps$ and runs in time
$\poly(n, \log w_{\max}, 1/\eps)$. We will then show how to
convert it into an FPTAS using Lemma~\ref{cosge1/n}.

Set $X=2\lceil 1/\eps\rceil$, and let $\eps'=1/X$.
We have $\eps/4\le \eps'\le \eps/2$.

Consider the linear program $\lp$ given in
Section~\ref{l_def}.
Instead of solving $\lp$ directly, 
we consider a family of linear feasibility programs (LFP)
$(\calL_i)_{i=1, \dots, nX}$, where the $k$th LFP
$\calL_k$ is given by
\begin{eqnarray*}
p_i&\ge& 0 \text{\quad for  } i=1, \dots, n,\\
p_1+\dots+p_n&\le& 1+\eps'k,\\
\sum_{i\in C}p_i&\ge& 1 \text{\quad for all } C\subseteq N \text{ such that }
\sum_{i\in C}w_i \ge q.
\end{eqnarray*}
As $\eps'nX= n$, it follows that at least
one of these LFPs has a feasible solution.
Now, let $k^*$ be the smallest value of $k$ for which
$\calL_k$ has a feasible solution. We have 
$\eps'(k^*-1)< \CoS(G)\le \eps'k^*$, or, equivalently, 
$\CoS(G)\le \eps'k^*\le \CoS(G)+\eps'$. Hence, by computing
$k^*$ we can obtain an additive $\eps'$-approximation
to $\CoS(G)$. Now, while it is not clear
if we could find $k^*$ in polynomial time, 
we will now show how to find a value $k$
that is guaranteed to be in the set $\{k^*, k^*+1\}$.

It is natural to approach this problem by trying
to successively solve $\calL_1, \dots, \calL_{nX}$.
However, just as the linear program $\lp$, 
the LFP $\calL_k$ has exponentially many
constraints (one for each winning coalition of $G$).
Moreover, an implementation of the separation oracle 
for $\calL_k$ would involve solving {\sc Knapsack}, 
which is an NP-hard problem when weights are given in binary.
Hence, we will now take a somewhat different approach.
Namely, we will show how to design an algorithm $\calS$ that, 
given a candidate solution $(p_1, \dots, p_n)$ for $\calL_k$,
either outputs a constraint that is violated by this solution
or finds a feasible solution for $\calL_{k+1}$.
The running time of $\calS(p_1, \dots, p_n)$
is $\poly(n, \log w_{\max}, 1/\eps)$.

The algorithm $\calS$ first checks if the candidate solution
$(p_1, \dots, p_n)$ satisfies the first $n+1$ constraints of the 
LFP. If no violated constraint is discovered at this step, 
it rounds up the payoffs by setting
$p'_i=\min\{\frac{\eps't}{n}\mid t\in{\mathbb N}, \frac{\eps't}{n}\ge p_i\}$ 
for each $i$, $1 \leq i \leq n$.
Note that for each $i$, $1 \leq i \leq n$, 
we have $p_i\le p'_i\le p_i+\frac{\eps'}{n}$, 
and the rounded payoff $p'_i$
can be represented as $p'_i=\frac{\eps'}{n}t_i$, where 
$t_i\in \{0, \dots, nX\}$.
We can now use a variant of the dynamic programming algorithm
used in the proof of Theorem~\ref{thm:cos-easy}
to decide whether
there is a subset of agents $C$ that satisfies 
$\sum_{i\in C}w_i\ge q$ and $\sum_{i\in C}p'_i < 1$
(see the remark at the end of that proof).
If there is such a subset, the rounded vector 
$(p'_1, \dots, p'_n)$ violates the constraint that corresponds to 
$C$, and hence the original vector $(p_1, \dots, p_n)$, 
which satisfies $p_i\le p'_i$ for all $i\in I$, 
violates it, too. Hence, $\calS$ outputs the 
corresponding constraint
and stops. Otherwise, it follows that $(p'_1, \dots, p'_n)$
satisfies all constraints of $\calL_k$
that correspond to the winning
coalitions of $G$. Moreover, we have
$$
\sum_{i=1}^n p'_i\le \sum_{i=1}^n p_i+n\frac{\eps'}{n}\le 1+\eps'k+\eps'.
$$
Hence, $(p'_1, \dots, p'_n)$ is a feasible solution for 
$\calL_{k+1}$, so $\calS$ outputs it and stops.

We are now ready to describe
our algorithm $\calA'$. It tries to solve
$\calL_1, \calL_2, \dots$ (in this order).
To solve $\calL_k$, 
it runs the ellipsoid algorithm on its input. Whenever the ellipsoid
algorithm makes a call to the separation oracle, $\calA'$ passes
this request to $\calS$, which either identifies a violated constraint, 
in which case $\calA'$ continues simulating the ellipsoid algorithm, 
or outputs a feasible solution for $\calL_{k+1}$, in which case
$\calA'$ stops and outputs $\eps'(k+1)$. If the ellipsoid algorithm
terminates and decides that the current LFP does not have a feasible 
solution, $\calA'$ proceeds to the next LFP in its list.
If the ellipsoid algorithm outputs a feasible solution for $\calL_k$, 
$\calA$ outputs $\eps'k$.

Recall that we denote by $k^*$ the smallest value of $k$ for which $\calL_{k}$
has a feasible solution.
Clearly, $\calA$ will correctly report
that neither of $\calL_1, \dots, \calL_{k^*-2}$ has a  feasible solution.
When solving $\calL_{k^*-1}$, it will either solve it correctly
(i.e., report that it has no feasible solutions) and move on to $\calL_{k^*}$, 
or discover a feasible solution for $\calL_{k^*}$. In the former case, 
$\calA'$ will either solve $\calL_{k^*}$ correctly, i.e., find
a feasible solution, or discover a feasible solution to $\calL_{k^*+1}$.
In either case, the output $\eps'k$ of our algorithm satisfies 
$k\in\{k^*, k^*+1\}$.

We have shown that 
$\CoS(G)\le \eps'k^*\le \CoS(G)+\eps'$.
Consequently, we have 
$\CoS(G)\le \eps'k\le \eps'(k^*+1)\le \CoS(G)+2\eps'\le \CoS(G)+\eps$.
This proves that $\calA'$
is an additive fully polynomial-time
approximation scheme for the cost of stability. 

We will now show how to convert $\calA'$ into an FPTAS $\calA$.
Our algorithm $\calA$ is given a game $G=[\w; q]$ and a parameter $\eps$.
It first tests if $\CoS(G)=0$ (equivalently, 
if $G$ has a nonempty core). By Theorem~\ref{veto}, 
this can be done by checking if $G$ has a veto player, i.e., 
whether $w(I\setminus\{i\})< q$ for some $i$, $1 \leq i \leq n$. 

If $\CoS(G)\neq 0$, $\calA$ runs $\calA'$
on input $(G, \eps/n)$. Let $\Delta$ be the output
of $\calA'(G, \eps/n)$; we have $\CoS(G)\le \Delta\le \CoS(G)+\eps/n$.
On the other hand, by Lemma~\ref{cosge1/n}
we have $\CoS(G)\ge 1/n$, and therefore
$$
\CoS(G)+\eps/n\le \CoS(G)+\eps \CoS(G)
=(1+\eps)\CoS(G).
$$
Hence $\Delta$ satisfies
$\CoS(G)\le \Delta\le (1+\eps)\CoS(G)$, as required.~\end{proof}

Moreover, one can get a 2-approximation to the 
adjusted gains simply by paying each agent in proportion to her weight,
and this bound can be shown to be tight.

\begin{theorem}\label{thm:2approx}
For any WVG $G=[\w; q]$ 
with $\CoS(G)=\Delta$, 
the super-imputation $\vecp^*$ given by
$p^*_i = \min \{1, \frac{w_i}{q}\}$ is stable
and satisfies $p^*(I)\le 2 p(I)$
for any super-imputation $\vecp\in\core(G(\Delta))$.
\end{theorem}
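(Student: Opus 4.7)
My plan is to prove the two assertions separately. First, I would verify stability of $\vecp^*$ by a direct case analysis on any winning coalition $C$: either $C$ contains a ``dictator'' $i$ with $w_i \ge q$, in which case $p^*_i = 1$ alone gives $p^*(C) \ge 1$; or every $i \in C$ has $w_i < q$, in which case $p^*(C) = \sum_{i \in C} w_i/q = w(C)/q \ge 1$ since $w(C) \ge q$. Hence $\vecp^* \in \core(G(p^*(I) - v(I)))$.

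For the 2-approximation, I plan to use LP duality applied to the linear program $\lp$ of Section~\ref{l_def}. Its minimum value is $p(I) = v(I) + \CoS(G)$, and its dual is the fractional-packing LP that maximizes $\sum_C \lambda_C$ subject to $\sum_{C \ni i} \lambda_C \le 1$ for each $i \in I$ and $\lambda_C \ge 0$, where $C$ ranges over winning coalitions. So it suffices to exhibit a feasible packing of value at least $p^*(I)/2$. Let $B = \{i \in I : w_i \ge q\}$ (dictators) and $S = I \setminus B$; then $p^*(I) = |B| + w(S)/q$. My construction sets $\lambda_{\{i\}} = 1$ for each $i \in B$ (contributing $|B|$, since each such singleton is winning) and, separately, builds a feasible packing supported on winning coalitions $C \subseteq S$ of value at least $w(S)/(2q)$. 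Because the two pieces have disjoint agent supports, they combine into a feasible packing of total value at least $|B| + w(S)/(2q) \ge (|B| + w(S)/q)/2 = p^*(I)/2$.

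The main obstacle is producing the packing on $S$, where every weight is strictly less than $q$. A greedy integer packing yields $k$ pairwise-disjoint minimal winning coalitions $C_1, \dots, C_k \subseteq S$, each with $w(C_j) \in [q, 2q)$, together with a leftover $L$ of weight $< q$; since $w(S) < 2kq + q$, this gives $k > w(S)/(2q) - 1/2$, so the integer bound may fall short of the target by up to $1/2$. I would close this gap fractionally by introducing additional winning coalitions of the form $(C_j \setminus \{i\}) \cup L$, which are winning whenever $w(C_j) - w_i + w(L) \ge q$, and distributing fractional weight between $\lambda_{C_j}$ and $\lambda_{(C_j \setminus \{i\}) \cup L}$ so that each agent's load remains at most $1$, thereby gaining the missing half-unit of packing value. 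Tightness of the factor $2$ is witnessed by the family $G = [q-1, q-1;\, q]$: here $\CoS(G) = 0$, so $p(I) = 1$, while $p^*(I) = 2(q-1)/q \to 2$ as $q \to \infty$.
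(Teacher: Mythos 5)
Your stability argument and the overall dual framework are sound, and the reduction to ``exhibit a fractional packing of winning coalitions of value at least $p^*(I)/2$'' is a legitimate (and arguably cleaner) reformulation: the paper works entirely on the primal side, but its key step is exactly the dual certificate you are looking for. The greedy accounting is also correct: the integer packing gives $k>w(S)/(2q)-1/2$, so the only problematic case is $B=\emptyset$ together with $w(S)>2kq$ (which forces $w(C_{j'})+w(L)>2q$ for the heaviest greedy block $C_{j'}$ --- precisely the paper's Case~2). However, the mechanism you propose for closing the残 gap does not work. If you move mass $t$ from $\lambda_{C_j}$ to $\lambda_{(C_j\setminus\{i\})\cup L}$, the load of every agent in $C_j\setminus\{i\}$ is $(1-t)+t=1$, so feasibility indeed holds, but the objective contribution is $(1-t)+t=1$: redistributing weight between a set and another set containing it minus one point conserves the packing value, so no amount of such shifting can ``gain the missing half-unit.'' Worse, with $\lambda_{C_j}=1$ for all $j$ the packing is already maximal on its support (every agent of every $C_j$ is saturated and no winning coalition fits inside $L$ alone, since $w(L)<q$), so any improvement must use at least three mutually overlapping coalitions.

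The missing idea is the paper's Case~2 construction, which in your dual language reads as follows: write $C_{j'}\cup L=A_1\cup A_2\cup A_3$ with $A_1$ the last agent added to $C_{j'}$, $A_2=C_{j'}\setminus A_1$, $A_3=L$; then $w(A_1),w(A_2)<q$ while all three pairwise unions have weight at least $q$ (using $w(A_1)+w(A_2)=w(C_{j'})\ge q$ and $w(A_1)+w(A_2)+w(A_3)>2q$), so setting $\lambda_{A_1\cup A_2}=\lambda_{A_1\cup A_3}=\lambda_{A_2\cup A_3}=\tfrac12$ gives every agent of $C_{j'}\cup L$ load exactly $1$ and packing value $\tfrac32$ on that block instead of $1$ --- exactly the extra $\tfrac12$ you need. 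Note that the third set $A_1\cup A_3=\{i\}\cup L$ is \emph{not} of your form $(C_j\setminus\{i\})\cup L$, which is why your family of candidate coalitions cannot produce it. Two smaller points: (i) your claim that one can always pack value $w(S)/(2q)$ inside $S$ fails outright when $0<w(S)<q$ (no winning coalition exists in $S$); this case is harmless only because it forces $|B|\ge 1$, so you should route it through the $|B|\ge 1$ branch explicitly, as the paper does with its $m=1$ case. (ii) Your tightness example $[q-1,q-1;q]$ is essentially the paper's $[1-\eps/3,1-\eps/3;1]$ and is fine.
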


\begin{proof}
First, it is easy to see that $\vecp^*$ is stable, as
we have $p^*(C)\ge \min\{1, \frac{w(C)}{q}\}$. 

Now, set $\Delta=\CoS(G)$ and
fix a super-imputation $\vecp$ in the core of $G(\Delta)$.
Let $I'=\{i\mid w_i\ge q\}$ and set $k=|I'|$. Clearly, if
$i\in I'$, for any stable super-imputation $\vecp'$ we have $p'_i\ge 1=p^*_i$.
On the other hand, it is clear that paying any agent more than $1$
is suboptimal, so $p_i=1$ for any $i\in I'$.

Sort all agents in $I\setminus I'$ by decreasing weights, and partition them
into sets $C_1,\ldots,C_m$ in the following way:

\begin{itemize}
\item
Set $j = 0$.
\item
While there are unallocated agents:
\begin{itemize}
\item
Set $j = j + 1$;
\item
Add agents to $C_j$ until $w(C_j) \geq q$
 or until there are no more agents.
\end{itemize}
\item 
Set $m = j$.
\item 
If $w(C_j) \geq q$, set $m=j+1$ and $C_m=\emptyset$.
\end{itemize}

Note that this procedure guarantees that $w(C_m) < q$, i.e., 
the last coalition $C_m$ loses. In particular, 
if $m=1$ then $w(C_1)<q$. Since $w(I) \geq q$, this means that
$k \geq 1$ and $C_1=I\setminus I'$. In this case, we have 
$$
p(I) \ge k, \quad 
p^*(I) = k + \sum_{i \in C_1} \frac{w_i}{q} < k + \frac{q}{q} = k + 1, 
$$ 
and hence $p^*(I)/p(I) < (k+1)/k \leq 2$.
Therefore, throughout the rest of the proof we can assume $m > 1$. 

Set $j' = \arg\max_{j \leq m} w(C_j)$, that is, 
$j'$ is the index of a maximum-weight coalition among $C_1, \dots, C_m$.
Observe that since $w(C_1)\ge q$ and $w(C_m)<q$, we have $j'\neq m$. 
To finish the proof, we consider two cases and show that
$p^*(I) \leq 2p(I)$ holds in each of them.

\begin{description}
\item[Case~1:] $w(C_{j'})+w(C_m) \leq 2q$.
For each $j \leq m-1$, we have $w(C_j) \geq q$, and therefore $p(C_j) \geq 1$.
Thus, we have
$$
p(I) \ge k + \sum_{j \neq m} p(C_j) = k+m-1.
$$
On the other hand, we have $w(C_j) \leq 2q$ for all $j$, $1 \leq j \leq m$, so
\begin{eqnarray*}
p^*(I) 
&= &    p^*(I')+\sum_{j \neq j',m} p^*(C_j) + p^*(C_{j'})+p^*(C_m) \\
&\leq & k  +\sum_{j \neq j',m} \frac{w(C_j)}{q}+\frac{w(C_{j'})+w(C_m)}{q} \\
& \leq & k + 2(m-2) +2  \leq 2(k+m-1) \leq 2 p(I).
\end{eqnarray*}

\item[Case~2:]
$w(C_{j'})+w(C_m) > 2q$.
We begin by computing $p^*(I)$, as it may be slightly larger in this case:
\begin{eqnarray*}
p^*(I) & =   & k + \sum_{j\neq m}\frac{w(C_j)}{q} + \frac{w(C_m)}{q} \\
       & \le & k + \frac{(m-1)2q + q}{q} = k+2m-1. 
\end{eqnarray*}
Fortunately, we can provide a better lower bound for $p(I)$. 
Let $A_1$ be the set that contains the last player in $C_{j'}$ only,
and set $A_2 = C_{j'} \setminus A_1$ and $A_3 = C_m$.
We have $w(A_1)< q$, since $A_1$ has just one agent,
and we have already removed all agents whose weight is at least $q$.
Furthermore, we have $w(A_2) < q$, since we move on to the next set
as soon as a total weight of at least $q$ is reached in the current set.
On the other hand, we have $A=A_1 \cup A_2 \cup A_3=C_{j'}\cup C_m$. As
$w(C_{j'})+w(C_m) > 2q$, 
we have $w(A_1)+w(A_3) = w(A) - w(A_2) \geq 2q - q = q$
and 
 $w(A_2)+w(A_3) = w(A) - w(A_1) \geq 2q - q = q$.

Therefore, we have $p(A_1\cup A_2)\ge 1$, $p(A_1\cup A_3)\ge 1$, 
$p(A_2\cup A_3)\ge 1$, and hence $p(A_1 \cup A_2\cup A_3)\ge 3/2$.
Thus, we have
\begin{eqnarray*}
p(I)&=&\sum_{i \in I'}p_i + \sum_{j\neq j',m}p(C_j) + p(C_{j'}) +p(C_m) \\
    & \geq & k + (m-2) + p(C_{j'}\cup C_m) \\
    & = & k + m-2 + p(A_1\cup A_2 \cup A_3) \\
    &\geq &  k + m -2 + \frac32 = \frac12(2k+2m-1) \geq \frac12p^*(I).
\end{eqnarray*}
\end{description}
This completes the proof of Theorem~\ref{thm:2approx}.~\end{proof}

To see that the analysis presented above is tight, 
consider the game
$[1-\frac{\epsilon}{3},1-\frac{\epsilon}{3};1]$ for any fixed $\eps>0$.
We have  $p^*(I) = 2-\frac{2\eps}{3}$. On the other hand, 
this game has a nonempty core, so we have $p(I)=1$, and hence
$p^*(I) > (2-\epsilon)p(I)$.

\section{Cost of Stability in Games with Coalition Structures}
\label{l_cs}

If a coalitional game is not super-additive, the formation of the grand coalition 
is not necessarily the most desirable outcome: for example, it may be the case
that by splitting into several teams the agents can accomplish more tasks 
than by working together. In such settings, the central authority may want
to stabilize a coalition structure, i.e., a partition of agents 
into teams. We now generalize
the cost of stability to such settings.

\subsection{Stabilizing a Fixed Coalition Structure}

We first consider the setting where the central authority
wants to stabilize a particular coalition structure.

Given a coalitional game $G=(I, v)$, a coalition structure
$\CS=(C^1, \dots, C^m)$ over $I$
and a vector $\vecDelta=(\Delta^1, \dots, \Delta^m)$, let
$G(\vecDelta)$ be the game with the set of agents $I$
and the characteristic function $v'$ given by
$v'(C^i)=v(C^i)+\Delta^i$ for $i=1, \dots, m$ and $v'(C)=v(C)$
for any $C\not\in\{ C^1, \dots, C^m\}$.
We say that the game $G(\vecDelta)$ is \emph{stable
with respect to $\CS$} if there exists an imputation 
$\vecp\in\I(\CS, G(\vecDelta))$
such that $(\CS, \vecp)$
is in the CS-core of $G(\vecDelta)$.
Also, we say that an external payment $\Delta$ \emph{stabilizes}
a coalition structure $\CS$ with respect to a game $G$ if there exist
$\Delta^1\ge 0, \dots, \Delta^m\ge 0$ such that 
$\Delta=\Delta^1+\cdots+\Delta^m$ and the game
$G(\vecDelta)$ is stable with respect to $\CS$.
We are now ready to define the cost of stability
of a coalition structure $\CS$ in $G$.

\begin{definition}
Given a coalitional game $G=(I, v)$ and a coalition structure 
$\CS=(C^1, \dots, C^m)$ over $I$, 
the \emph{cost of stability} $\CoS(\CS, G)$ of the coalition structure 
$\CS$ in $G$
is the smallest external payment needed to stabilize $\CS$, i.e., 
\begin{align*}
\CoS(\CS, G) = \inf 
\{ 
  \sum_{i=1}^m\Delta^i \mid 
  & \Delta^i\ge 0 \text{ }\mathrm{for}\text{ }i=1, \dots, m
  \quad\mathrm{and}\\
  & \exists \vecp\in\I(\CS, G(\vecDelta))
  \quad\mathrm{ s.t. }\quad
  (\CS, \vecp)\in\cscore(G(\vecDelta))
 \}.
 \end{align*}
\end{definition}

Fix a game $G=(I, v)$ and set $v_{\max}=\max_{C\subseteq I}v(C)$.
It is easy to see that for any coalition structure 
$\CS=(C^1, \dots, C^m)$ 
the game $G(\vecDelta)$, where $\Delta^i=|C^i|v_{\max}$, 
is stable with respect to $\CS$, and therefore $\CoS(\CS, G)$
is well-defined and satisfies $\CoS(\CS, G)\le nv_{\max}$.
Moreover, as in the case of games without coalition structures, the value
$\CoS(\CS, G)$ can be obtained as an optimal solution to a linear program.
Indeed, we can simply take the linear program $\lp$ 
and replace the 
constraint $\sum_{i\in I}p_i = v(I)+\Delta$ with the constraint
$\sum_{i\in I}p_i = v(\CS)+\Delta$. It is not hard to see that the resulting
linear program, which we will denote by $\lp_\CS$, 
computes $\CoS(\CS, G)$: in particular, the constraints
$\Delta^i\ge 0$ for $i=1, \dots, m$ are implicitly captured by
the constraints 
$\sum_{i\in C^i}p_i\ge v(C^i)$ in line~(4) of $\lp_\CS$.

We now turn to the question of computing the cost of stability
of a given coalition structure in WVGs.
To this end, we will modify the decision problems 
stated in Section~\ref{l_exact} as follows.

\medskip

\noindent {\sc Super-Imputation-Stability-WVG-CS:}
Given a WVG $G=[\w;q]$ with the set of agents $I$,
a coalition structure $\CS=(C^1, \dots, C^m)$ over $I$,  
a vector $\vecDelta=(\Delta^1, \dots, \Delta^m)$ and an imputation
$\vecp\in\I(\CS, G(\vecDelta))$, 
decide if $(\CS, \vecp)$ is in the CS-core of
$G(\vecDelta)$.

\medskip

\noindent {\sc CoS-WVG-CS:}
Given a WVG $G=[\w;q]$ with the set of agents $I$, 
a coalition structure $\CS$ over $I$ and a parameter $\Delta$,
decide whether $\CoS(\CS, G)\le \Delta$.

\medskip

The results of Section~\ref{l_exact} immediately imply
that both of these problems are computationally hard even for $m=1$.
Moreover, using the results of~\cite{ecai08}, we can show that
{\sc Super-Imputation-Sta\-bi\-li\-ty-WVG-CS} remains $\conp$-complete
even if $\vecDelta$ is fixed to be $(0, \dots, 0)$.
On the other hand, when weights are integers given in unary, 
both {\sc CoS-WVG-CS} and {\sc Super-Imputation-Sta\-bi\-li\-ty-WVG-CS}
are polynomial-time solvable.
Indeed, to solve 
{\sc Super-Imputation-Stability-WVG-CS}, one needs to check
if there is a coalition $C$ with $w(C)\ge q$, $p(C)<1$.
This can be done using the dynamic programming algorithm
from the proof of Theorem~\ref{thm:cos-easy}.
Moreover, to solve {\sc CoS-WVG-CS}, we can simply
run the ellipsoid algorithm on the linear program $\lp_\CS$
described earlier in this section, using the algorithm for
{\sc Super-Imputation-Stability-WVG-CS} as a separation oracle.
Thus, we obtain the following result.
\begin{theorem}
When all players' weights are integers given in unary,
the problems  {\sc CoS-WVG-CS} and {\sc Super-Im\-pu\-ta\-tion-Sta\-bi\-li\-ty-WVG-CS}
are in $\p$.
\end{theorem}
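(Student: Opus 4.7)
The plan is to mirror the two-step approach used in Section~\ref{sec:compute} and Theorem~\ref{thm:cos-easy}: first give a polynomial-time algorithm for {\sc Super-Imputation-Stability-WVG-CS} by adapting the dynamic program on weights, and then use that algorithm as a separation oracle inside the ellipsoid method applied to $\lp_\CS$.

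For {\sc Super-Imputation-Stability-WVG-CS}, I would observe that an instance $(G, \CS, \vecDelta, \vecp)$ with $\vecp \in \I(\CS, G(\vecDelta))$ is a yes-instance if and only if $p(C) \ge 1$ for every coalition $C \subseteq I$ with $w(C) \ge q$. Indeed, the constraints $p(C^j) \ge v(C^j) + \Delta^j$ for $C^j \in \CS$ are already equalities by the definition of $\I(\CS, G(\vecDelta))$, so the only potentially blocking coalitions are those winning coalitions not in the structure, and for those $v(C) = 1$. Now I would reuse the table $X_{i,w} = \min\{p(C) : C \subseteq \{1,\dots,i\}, w(C) = w\}$ from the proof of Theorem~\ref{thm:cos-easy}, with the same recurrence $X_{i+1,w} = \min\{X_{i,w}, p_{i+1} + X_{i,w-w_{i+1}}\}$. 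Since all $w_i$ are in unary, the table has polynomially many entries, and the quantity $p^* = \min\{X_{n,w} : w \ge q\}$ can be computed in $\poly(n, w_{\max}, |\vecp|)$. The instance is in the CS-core exactly when $p^* \ge 1$, so this decides {\sc Super-Imputation-Stability-WVG-CS} in polynomial time.

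For {\sc CoS-WVG-CS}, I would solve the linear program $\lp_\CS$ (with variables $p_1, \dots, p_n$ and $\Delta$) via the ellipsoid method. A feasibility/separation oracle must, on any candidate $(\vecp, \Delta)$, either assert feasibility or exhibit a violated constraint. The constraints $\Delta \ge 0$, $p_i \ge 0$, and $\sum_{i \in I} p_i = v(\CS) + \Delta$ can each be checked in linear time. For the exponentially many constraints $p(C) \ge v(C)$, I would run the dynamic program above: if $p^* < 1$, the argmin coalition $C^*$ (which can be recovered by storing back-pointers in the DP table) furnishes a violated constraint with $v(C^*) = 1 > p(C^*)$; otherwise all such constraints hold (the losing coalitions trivially satisfy $p(C) \ge 0 = v(C)$). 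Thus the oracle runs in polynomial time, so the ellipsoid method yields an optimal solution to $\lp_\CS$ in polynomial time, and comparing its value to $\Delta$ decides the instance.

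The only technical point worth double-checking is that the ellipsoid method's running time depends polynomially on the bit-length of the intermediate iterates, which is standard once one observes that an optimal basic feasible solution to $\lp_\CS$ has bit-length polynomial in the input (because the constraint matrix has entries in $\{0, 1\}$ and the right-hand sides are small integers), so the oracle is called on vectors $\vecp$ of polynomial bit-length and our DP is genuinely polynomial. There is no deep obstacle; the content of the theorem is really the observation that the separation problem for winning coalitions in a WVG with unary weights is exactly a \textsc{Knapsack}-style decision that admits the same pseudopolynomial DP used earlier.
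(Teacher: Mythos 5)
Your proof is correct and follows essentially the same route as the paper: decide {\sc Super-Imputation-Stability-WVG-CS} by running the pseudopolynomial dynamic program of Theorem~\ref{thm:cos-easy} to find the minimum payment to a winning coalition, and then decide {\sc CoS-WVG-CS} by using that procedure as a separation oracle for the ellipsoid method on $\lp_\CS$. The additional remarks you make (that coalitions of $\CS$ are paid their adjusted values exactly, and that the iterates of the ellipsoid method have polynomial bit-length) are correct and only make explicit what the paper leaves implicit.
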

Finally,
we 
adapt the approximation
algorithm presented in Section~\ref{l_approx} to this setting.
\begin{theorem}
There exists an FPTAS for $\CoS(\CS, G)$ in WVGs.
\end{theorem}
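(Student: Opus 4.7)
The plan is to mirror the three-step structure of the proof of Theorem~\ref{thm:fptas}: (i) establish a $1/\poly(n)$ lower bound on $\CoS(\CS,G)$ when it is nonzero; (ii) build an additive FPTAS by sweeping a family of linear feasibility programs with a rounded separation oracle; and (iii) combine the two into a multiplicative FPTAS. The key structural observation enabling this transfer is that $\lp_\CS$ is literally $\lp$ with its total-payment constraint $\sum_i p_i = v(I) + \Delta$ replaced by $\sum_i p_i = v(\CS)+\Delta$: the two programs share their feasible regions before minimization, so $\CoS(\CS,G)=\CoS(G)+v(I)-v(\CS)$.

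Two of the three pieces come nearly for free. The LFPs in the proof of Theorem~\ref{thm:fptas} become
\[
   \calL_k^{\CS}:\quad p\ge 0,\quad \sum_{i\in I} p_i\le v(\CS)+\eps'k,\quad \sum_{i\in C} p_i\ge 1 \text{ for all winning } C,
\]
and the rounded separation oracle (round each candidate $p_i$ up to a multiple of $\eps'/n$, then invoke the pseudo-polynomial dynamic program of Theorem~\ref{thm:cos-easy} to test for a winning coalition whose rounded total payoff is less than $1$) transfers verbatim. Running the ellipsoid method on $\calL_1^{\CS},\calL_2^{\CS},\ldots$ and stopping at the first feasible LFP yields an additive $\eps'$-approximation to $\CoS(\CS,G)$ in time $\poly(n,\log w_{\max},1/\eps')$; the search range $k=1,\dots,nX$ with $X=2\lceil 1/\eps'\rceil$ is enough because $\CoS(\CS,G)\le n v_{\max}\le n$ for WVGs.

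The step that genuinely needs fresh work is the analogue of Lemma~\ref{cosge1/n}: if $\CoS(\CS,G)>0$, then $\CoS(\CS,G)\ge 1/n$. For $v(\CS)\le 1$ this is immediate, since $\CoS(\CS,G)\ge \CoS(G)$ by the identity above and Lemma~\ref{cosge1/n} applies directly. The genuinely new case is $v(\CS)\ge 2$, where one must rule out $\CoS(G)$ lying in the narrow interval $(v(\CS)-1,\,v(\CS)-1+1/n)$. I would attack this by generalizing the removal argument of Lemma~\ref{cosge1/n}: starting from an optimizer $\vecp^*$ of $\lp_\CS$, peel off the highest-paid coordinates one at a time and exhibit some winning coalition whose payoff under the resulting restriction falls below $1$, contradicting stability. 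This is the main obstacle: the single-player removal from Lemma~\ref{cosge1/n} exploited that $I\setminus\{i\}$ is winning with value exactly $1$, but now the relevant integer threshold is $v(\CS)$, so a more careful multi-player peel-off---or, alternatively, a dual-side counting argument on a fractional packing attaining $p^*(I)=v(\CS)+\Delta$---is required.

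Given the lower bound, the final step is verbatim the last paragraph of the proof of Theorem~\ref{thm:fptas}: run the additive routine with precision $\eps'=\eps/n$; any output strictly below $1/n$ certifies $\CoS(\CS,G)=0$ and is rounded to $0$, while any output at least $1/n$ already lies in $[\CoS(\CS,G),(1+\eps)\CoS(\CS,G)]$, delivering the desired fully polynomial-time multiplicative approximation.
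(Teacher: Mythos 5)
The paper states this theorem without proof (it only says the algorithm of Section~\ref{l_approx} is ``adapted''), so there is no official argument to compare against; judged on its own terms, your proposal is the natural adaptation and most of it is sound, but it contains one genuine, self-acknowledged gap. Your structural observation is correct and clean: $\lp_\CS$ and $\lp$ have identical feasible regions (the constraints $\sum_i p_i \ge v(I)$ and $\sum_i p_i \ge v(\CS)$ are both implied by the coalition constraints), so both compute $\tau^* = \min\{\sum_i p_i \mid p\ge 0,\ p(C)\ge v(C)\ \forall C\}$ up to an additive offset, giving $\CoS(\CS,G)=\CoS(G)+v(I)-v(\CS)$. The additive approximation scheme then transfers essentially verbatim, as you say, and the range $k\le nX$ is justified since $\CoS(\CS,G)\le n$ in a WVG.

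The gap is the analogue of Lemma~\ref{cosge1/n}, and you have not proved it --- you have only named two possible attacks (``multi-player peel-off'' and ``a dual-side counting argument'') without carrying either out. This is not a routine detail. Via your identity (and the observation that $\CoS(\CS,G)\ge 1$ whenever $v(\CS)$ is not the maximum social welfare), the needed statement is: if $\tau^*$ strictly exceeds the maximum number $\nu$ of pairwise disjoint winning coalitions, then $\tau^*\ge \nu+1/\poly(n)$. Equivalently, the fractional cover/matching value of the winning-coalition hypergraph of a WVG cannot lie in $(\nu,\nu+1/\poly(n))$. For a general set system this is simply false --- LP optima over $0$--$1$ constraint matrices can have denominators as large as $n^{n/2}$, so the gap above the integer $\nu$ can be exponentially small --- which means any proof must exploit the threshold structure of WVGs. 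The proof of Lemma~\ref{cosge1/n} does exploit it, but in a way that does not generalize: it hinges on the fact that for $v(I)=1$ the single coalition $I\setminus\{i\}$ is winning, whereas for $v(\CS)\ge 2$ the excess payment of less than $1/n$ is spread across several winning blocks and no single-element deletion produces the contradiction. Your own ``rescale within each block'' or ``peel the top-paid players'' heuristics only yield $p'(C)\ge(1-\delta)/(1+\delta)$ for winning $C$, which falls short of the needed $p'(C)\ge 1$. Until this lower bound (or some other device for certifying $\CoS(\CS,G)=0$ versus $\CoS(\CS,G)\ge 1/\poly(n)$) is actually established, what you have is an additive FPTAS, not the claimed multiplicative one.
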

\subsection{Finding the Cheapest Coalition Structure to Stabilize}
So far, we have focused on the setting where the external party wants to
stabilize a particular coalition structure. However, it can also be the case
that the central authority simply wants to achieve stability, 
and does not care which coalition structure arises, as long as it can
be made stable using as little money as possible.
We will now introduce the notion of \emph{cost of stability for
games with coalition structures} to capture this type of setting.
Recall that $\calCS(I)$ denotes
the set of all coalition structures over $I$.
\begin{definition}
Given a coalitional game $G=(I, v)$, let 
the \emph{cost of stability for $G$ with coalition structures}, denoted
by $\CoS_\CS(G)$,
be $\min\{\CoS(\CS, G)\mid \CS\in\calCS(I)\}$.
\end{definition}
Clearly, one can compute $\CoS_\CS(G)$ by enumerating
all coalition structures over $I$ and picking the one with the
smallest value of $\CoS(\CS, G)$. Alternatively, note that the linear
program $\lp_\CS$ depends only on the value of the coalition
structure $\CS$. Hence, stabilizing  
all coalition structures with the same total value has the same cost.
Moreover, this implies that the cheapest coalition structure
to stabilize is the one that maximizes social welfare.
Hence, if we could compute the value of the coalition structure $\CS^*$ 
that maximizes social welfare, we could find $\CoS_\CS(G)$
by solving $\lp_{\CS^*}$.

For WVGs, paper~\cite{ecai08} (see Theorem~2 there) shows that
if weights are given in binary, it is NP-hard to decide whether 
a given game has a nonempty CS-core. As this question
is equivalent to asking whether $\CoS_\CS(G)=0$, the latter
problem is NP-hard, too. 
One might hope that computing $\CoS_\CS(G)$ is easy
if the weights of all players are given in unary. However, this does
not seem to be the case. Indeed, our algorithms for computing
the cost of stability in other settings relied on solving
the corresponding linear program. To implement
this approach in our scenario, we would need to compute
the value of the coalition structure that maximizes social welfare.
However, a straightforward reduction from {\sc 3-Partition}, 
a classic problem that is known to be NP-hard even for unary weights,  
shows that the latter problem is NP-hard even if weights are given in unary.
While this does not immediately imply that computing
$\CoS_\CS(G)$ is hard for small weights, it means that finding
the cheapest-to-stabilize outcome is NP-hard even if weights
are given in unary.

\subsection{Stabilizing a Particular Coalition}
\label{sec:single}

We now consider the case where the central
authority wants a particular group of agents
to work together, but does not care about the stability of the overall
game. Thus, it wants to identify a coalition structure
containing a particular coalition $C$ and the minimal subsidy
to the players that ensures that no set of players that includes members
of $C$ wants to deviate. We omit the formal definition 
of the corresponding cost of stability concept, as well as 
its algorithmic analysis due to space constraints.
However, we would like to mention several subtle
points that arise in this context.
First, one might think that the optimal way to stabilize a coalition
is to offer payments to members of this coalition only. However, 
this turns out not to be true, as the following example shows.

\begin{example}\label{ex1:single}
Consider the game $G=[1, 1, 1; 2]$ and the coalition $C=\{1, 2\}$.
If we were to stabilize $C$ by paying its members
only, we would have to ensure that each of them receives a payment
of 1, resulting in an external payment of $1$: if, e.g.,
player 1 receives $p_1<1$,
player $3$ could offer him
to form the coalition $\{1, 3\}$ and distribute the payoffs as
$p'_1=p_1+\frac{1-p_1}{2}>p_1$, $p'_3=\frac{1-p_1}{2}>0=p_3$.
On the other hand,
it is not hard to see that the payoff vector
$(\frac12, \frac12, \frac12)$ ensures that no group
of players wants to deviate from $(\{1, 2\}, \{3\})$,
i.e., the central authority can stabilize $C$ by spending
$\frac12$ only as long as it is willing to 
pay the players outside of $C$.
Thus, the cheapest way to stabilize a particular coalition 
may involve paying agents who do not belong to that coalition.
\end{example}

Second, as shown by Example~\ref{ex2:single} below,
stabilizing a given coalition
may be strictly cheaper than stabilizing \emph{any} of the 
coalition structures
that contain it. 
Thus choosing a good definition of the cost of stability
of an individual coalition is a nontrivial issue.

\begin{example}\label{ex2:single}
Consider the weighted voting game
$G=[8, 8, 9, 9, 1; 10]$ and a coalition $C=\{1, 2\}$.
It is not hard to check that $G$ has an empty CS-core
and therefore $\CoS_\CS(G)>0$.
However, no player in $C$ has an incentive to deviate
from the coalition structure $\CS=(\{1, 2\}, \{3, 4\}, \{5\})$
with the payoff vector $\vecp=(.5, .5, .5, .5, 0)$. That is, if the central
authority is only interested in stabilizing $C$, it can achieve this
goal without spending any money. However, from a long-term perspective
this approach may be dangerous. Indeed, consider the coalition $\{4, 5\}$
that has an incentive to deviate from $(\CS, \vecp)$.  
If this deviation happens, player $3$ is left on her own, and will
be happy to form a coalition with player $1$ in which, e.g., $1$ gets $.9$
and $3$ gets $.1$. Clearly, this proposition would be attractive to player
$1$ as well, which would cause the coalition $C$ to fall apart.
Thus, stabilizing a given coalition
may be strictly cheaper than stabilizing \emph{any} of the
coalition structures
that contain it.
\end{example}

\section{Related Work}
\label{l_related_work}

The complexity of various solution concepts
in coalitional games
is a well-studied topic~\cite{cs04,is05,cs06,yokoo05}.
In particular, \cite{conf/aaai/ElkindGGW07} analyzes some important
computational aspects of stability 
in WVGs, proving a number of results on the complexity of the least 
core and the nucleolus.  The complexity of the CS-core in WVGs is studied 
in~\cite{ecai08}. Paper~\cite{journals/jair/MondererT04} is similar to 
ours in spirit. 
It considers the setting 
where an external party intervenes in order to achieve a certain
outcome using monetary payments.
However, \cite{journals/jair/MondererT04} 
deals with the very different domain of \emph{non}cooperative games.
There are also similarities between our work and the recent
research on bribery in
elections~\cite{fal06}, 
where an external party pays voters to change their preferences
in order to make a given candidate win.
A companion paper~\cite{mfcs-cos} studies
the cost of stability in network
flow games.

\section{Conclusion}
\label{l_conclusions}

We have examined the possibility of stabilizing a coalitional
game by offering the agents additional payments in order 
to discourage them from deviating, and defined the cost of stability as the minimal
total payment that allows a stable division of the gains. 
We focused on the computational aspects of this concept
for weighted voting games. In the setting where the outcome 
to be stabilized is the grand coalition, we provided a complete
picture of the computational complexity of the related decision problems.
We then extended our results to settings where
agents can form a coalition structure. 

There are several lines of possible future research. First, 
while the focus of this paper was on weighted voting games, 
the notion of the cost of stability is defined for any coalitional
game. Therefore, a natural research direction
is to study the cost of stability
in other classes of games. Second, we would like to develop
a better understanding of the relationship between the cost of stability
of a game, and its least core and nucleolus.
Finally, it would be interesting to extend the notion
of the cost of stability to games with nontransferable utility
and partition function games.

\medskip
\noindent{\bf Acknowledgments:} We thank the SAGT'09 reviewers for many
helpful comments.  This work was supported in part 
by 
DFG grants RO~\mbox{1202/\{11-1, 12-1\}}, the ESF's EUROCORES program LogICCC,
the Alexander von Humboldt Foundation's TransCoop program, 
EPSRC grant GR/T10664/01, ESRC grant  ES/F035845/1, 
NRF Research Fellowship, 
Singapore Ministry of Education AcRF Tier~1 Grant, 
and by ISF grant \#898/05.
This work was done in part while the first author was at Hebrew University
and while the sixth author was visiting Hebrew University and the
University of Rochester.


\begin{thebibliography}{1}

\bibitem{aumann-hart_s:1994a}
R.~J. Aumann and S.~Hart, editors.
\newblock {\em Handbook of Game Theory, with Economic Applications, Vol. 2}.
\newblock North-Holland, 1994.

\bibitem{aamas-cos}
Y.~Bachrach, R.~Meir, M.~Zuckerman, J.~Rothe, and J.~S.~Rosenschein.
\newblock The cost of stability in weighted voting games (extended abstract).
\newblock In {\em Proc.\ of AAMAS-09}, pp.~1289--1290, 2009.

\bibitem{tijs} 
R.~Branzei, D.~Dimitrov, and S.~Tijs.
\newblock {\em Models in Cooperative Game Theory}.
\newblock Springer, 2008.

\bibitem{chatterjee93}
K.~Chatterjee, B.~Dutta, and K.~Sengupta.
\newblock {A noncooperative theory of coalitional bargaining}.
\newblock {\em Review of Economic Studies}, 60:463--477, 1993.

\bibitem{cs04}
V.~Conitzer and T.~Sandholm.
\newblock Computing {Shapley} values, manipulating value division schemes, and
  checking core membership in multi-issue domains.
\newblock In {\em Proc.\ of AAAI-04}, pp.~219--225, 2004.

\bibitem{cs06}
V.~Conitzer and T.~Sandholm.
\newblock {Complexity of constructing solutions in the core based on 
synergies among coalitions}.
\newblock {\em Artificial Intelligence}, 170(6-7):607--619, 2006.

\bibitem{Ephrati:1991:AAAI}
E.~Ephrati and J.~S.~Rosenschein.
\newblock The {Clarke Tax} as a consensus mechanism among automated agents.
\newblock In {\em Proc.\ of IJCAI-91}, pp.~173--178, 1991.

\bibitem{ecai08}
E.~Elkind, G.~Chalkiadakis, and N.~R. Jennings.
\newblock Coalition structures in weighted voting games.
\newblock In {\em Proc.\ of ECAI-08}, pp.~393--397, 2008

\bibitem{conf/aaai/ElkindGGW07}
E.~Elkind, L.~A. Goldberg, P.~W. Goldberg, and M.~Wooldridge.
\newblock Computational complexity of weighted threshold games.
\newblock In {\em Proc.\ of AAAI-07}, pp.~718--723, 2007.

\bibitem{fal06}
P.~Faliszewski, E.~Hemaspaandra, and L.~A.~Hemaspaandra.
\newblock The complexity of bribery in elections.
\newblock In {\em Proc.\ of AAAI-06}, pp.~641--646, 2006.
\newblock Full version to appear in {\em JAIR}.

\bibitem{garey:np}
M.~R. Garey and D.~S. Johnson.
\newblock {\em Computers and Intractability: A Guide to the Theory of
  {NP}-Completeness}.
\newblock W. H. Freeman and Company, 1979.


\bibitem{is05}
S.~Ieong and Y.~Shoham.
\newblock Marginal contribution nets: A compact representation scheme for
  coalitional games.
\newblock In {\em Proc.\ of EC-05}, pp.~193--292, 2005.

\bibitem{malizia07}
E.~Malizia, L.~Palopoli, and F.~Scarcello.
\newblock Infeasibility certificates and the complexity of the core in coalitional games.
\newblock In {\em Proc.\ of IJCAI-07}, pp.~1402-1407, 2007.




\bibitem{journals/jair/MondererT04}
D.~Monderer and M.~Tennenholtz.
\newblock {K}-implementation.
\newblock {\em Journal of Artificial Intelligence Research}, 21:37--62,
  2004.

\bibitem{muroga}
S.~Muroga.
\newblock Threshold Logic and its Applications.
\newblock John Wiley \& Sons, 1971.

\bibitem{okada96}
A.~Okada.
\newblock {A noncooperative coalitional bargaining game with random 
proposers}.
\newblock {\em Games and Economic Behavior}, 16:97--108, 1996.

\bibitem{mfcs-cos}
E.~Resnick, Y.~Bachrach, R.~Meir, and J.~S.~Rosenschein.
\newblock {The cost of stability in network flow games}.
\newblock To appear in {\em Proc.\ of MFCS-09}, 2009.

\bibitem{Rosenschein:1985:IJCAI}
J.~S.~Rosenschein and M.~R.~Genesereth.
\newblock Deals among rational agents.
\newblock In {\em Proc.\ of IJCAI-85}, pp.~91--99, 1985.

\bibitem{sandholm1995}
T.~Sandholm and V.~Lesser.
\newblock Issues in automated negotiation and electronic commerce: Extending
  the contract net framework.
\newblock In {\em Proc.\ of ICMAS-95}, pp.~328--335, 1995.

\bibitem{zt}
A.~Taylor and W.~Zwicker.
\newblock {\em {Simple Games: Desirability Relations, Trading,
  Pseudoweightings}}.
\newblock Princeton University Press, 1999.

\bibitem{wellman:1995}
M.~P. Wellman.
\newblock The economic approach to artificial intelligence.
\newblock {\em ACM Computing Surveys}, 27:360--362, 1995.

\bibitem{yokoo05}
M.~Yokoo, V.~Conitzer, T.~Sandholm, N.~Ohta, and A.~Iwasaki. 
\newblock {Coalitional games in open anonymous environments}.
\newblock In {\em Proc.\ of AAAI-05}, pp.~509-514, 2005.


\end{thebibliography}
\bibliographystyle{plain}

\end{document}